%%
%% This is file `elsarticle-template-num.tex',
%% generated with the docstrip utility.
%%
%% The original source files were:
%%
%% elsarticle.dtx  (with options: `numtemplate')
%%
%% Copyright 2007, 2008 Elsevier Ltd.
%%
%% This file is part of the 'Elsarticle Bundle'.
%% -------------------------------------------
%%
%% It may be distributed under the conditions of the LaTeX Project Public
%% License, either version 1.2 of this license or (at your option) any
%% later version.  The latest version of this license is in
%%    http://www.latex-project.org/lppl.txt
%% and version 1.2 or later is part of all distributions of LaTeX
%% version 1999/12/01 or later.
%%
%% The list of all files belonging to the 'Elsarticle Bundle' is
%% given in the file `manifest.txt'.
%%

%% Template article for Elsevier's document class `elsarticle'
%% with numbered style bibliographic references
%% SP 2008/03/01

\documentclass[preprint,1p]{elsarticle}
\usepackage{amsmath}

%% Use the option review to obtain double line spacing
%% \documentclass[authoryear,preprint,review,12pt]{elsarticle}

%% Use the options 1p,twocolumn; 3p; 3p,twocolumn; 5p; or 5p,twocolumn
%% for a journal layout:
%% \documentclass[final,1p,times]{elsarticle}
%% \documentclass[final,1p,times,twocolumn]{elsarticle}
%% \documentclass[final,3p,times]{elsarticle}
%% \documentclass[final,3p,times,twocolumn]{elsarticle}
%% \documentclass[final,5p,times]{elsarticle}
%% \documentclass[final,5p,times,twocolumn]{elsarticle}

%% if you use PostScript figures in your article
%% use the graphics package for simple commands
%% \usepackage{graphics}
%% or use the graphicx package for more complicated commands
%% \usepackage{graphicx}
%% or use the epsfig package if you prefer to use the old commands
%% \usepackage{epsfig}

%% The amssymb package provides various useful mathematical symbols
%% \usepackage{amssymb}
%% The amsthm package provides extended theorem environments
%% \usepackage{amsthm}

%% The lineno packages adds line numbers. Start line numbering with
%% \begin{linenumbers}, end it with \end{linenumbers}. Or switch it on
%% for the whole article with \linenumbers.
%% \usepackage{lineno}

%% Own package
\usepackage{graphicx}
\usepackage{amssymb, amsmath}
% 处理数学公式中的黑斜体的宏包
\usepackage{bm}
\usepackage[amsmath,thmmarks]{ntheorem}
\usepackage{color,CJK}
\usepackage{CJK}
\usepackage[ruled,lined,algonl]{algorithm2e}
\usepackage{dcolumn}
\usepackage{amsmath}

% 算法环境调整
%\incmargin{2em} \linesnumbered

% 证明环境调整
\theoremseparator{.}

\newtheorem{theorem}{Theorem}[section]
\newtheorem{proposition}[theorem]{Proposition}
\newtheorem{lemma}[theorem]{Lemma}

\theorembodyfont{\rmfamily}
\newtheorem{definition}{Definition}[section]

\theoremstyle{nonumberplain}
\newtheorem{problema}{Problem A}
\newtheorem{problemb}{Problem B}

\theoremsymbol{$\square$}
\newtheorem{proof}{Proof}

% 列表环境

\newcommand{\uvar}{\bm{u}}
\newcommand{\xvar}{\bm{x}}

%%%%%%%%%%%%%%%%%%%%%%%%%%%%%%%%%%%%%%%%%%%%%%%%%%%%%%%%%%
%%% 数学符号声明
%%%%%%%%%%%%%%%%%%%%%%%%%%%%%%%%%%%%%%%%%%%%%%%%%%%%%%%%%%

%%%%%%%%%%%%%%%%%%%%%%%%%%%%%%%%%%%%%%%%%%%%%%%%%%%%%%%%%
% new operators in this paper
% Determinant

%
% Algebraic variables

% Transcendental variables

% Regular elements

% total quotient ring

% separant

% K(trans var)

% algebraic closure of K

%\newcommand{\overkx}{\overk[\point{x}]}

% finite field F_q

% 算法相关

%%%%%%%%%%%%%%%%%%%%%%%%%%%%%%%%%%%%%%%%%%%%%%%%%%%%%%%%%
% 多项式的类

% 系数

% 容度

% 本原部分

% 全次数

% 导元次数

% 初式

\DeclareMathOperator{\ini}{ini}
% 符号函数

% 余式
\DeclareMathOperator{\rem}{rem}
% 首项系数

% 首项

% 首单项式

% 导元

% 尾式reductum

% 伪商

% 伪余

% 饱和

% 最小公倍式

% 特征

% 结式
\DeclareMathOperator{\res}{res}
% 边界多项式
\DeclareMathOperator{\bp}{BP}
% 判别式
\DeclareMathOperator{\discr}{discr}
% 互异实零点个数
\DeclareMathOperator{\nz}{NZero}
% 理想的代数簇

% 代数簇对应的理想

% 项的集合

% 单项式的集合

% 取最大元

% 有限域

% 集合的势

% 多项式的multideg

% 项序

%%%%%%%%%%%%%%%%%%%%%%%%%%%%%%%%%%%%%%%%%%%%%%%%%%%%%%%%%%
%%% 公共命令
%%%%%%%%%%%%%%%%%%%%%%%%%%%%%%%%%%%%%%%%%%%%%%%%%%%%%%%%%%

% 修改定理环境中列表编号和缩进

% 修改想法备注
% \newcommand{\remark}[1]{\textcolor[rgb]{1.00,0.00,0.00}{[#1]}}

% 强调
\newcommand{\alert}[1]{\emph{#1}}
% 域

% 多项式集合
\newcommand{\pset}[1]{\mathcal{#1}}
% 半代数系统
\newcommand{\sas}[1]{\mathbb{#1}}
% 理想

% 理想的基

\newcommand{\bases}[1]{\langle #1 \rangle}
% 点或向量

% 域的代数闭包

% 域的扩域

% 域的扩域
\newcommand{\val}[1]{\overline{#1}}
% 各种数域

\newcommand{\rnum}{\mathbb{R}}
\newcommand{\cnum}{\mathbb{C}}

\newcommand{\qnum}{\mathbb{Q}}
% 多项式的系数域

% 多项式的系数环

% 多项式环 k[X] R[X]

% 多项式环 k[x1,...,x_n]

% 理想的根

% Grobner

\newcommand{\grobner}{Gr\"{o}bner }

% 算法中与,或

% 充分性 必要性
%\newcommand{\chongfen}{$(\Longrightarrow)$~}
%\newcommand{\biyao}{$(\Longleftarrow)$~}

% 大于等于号重定义
%\renewcommand{\geq}{\geqslant}
%\renewcommand{\leq}{\leqslant}

% 整除

% 步骤
\newcommand{\step}[1]{\medskip\noindent\textbf{STEP #1}}
\def\ste#1{\medskip\noindent{\sc Step #1}}

%%%%%%%%%%%%%%%%%%%%%%%%%%%%%%%%%%%%%%%%%%%%%%%%%%%%%%%%%%
%%% 李晓亮
%%%%%%%%%%%%%%%%%%%%%%%%%%%%%%%%%%%%%%%%%%%%%%%%%%%%%%%%%%

% 李的备注

\newcommand{\rli}[1]{}

% Zariski闭包

% 射影闭包

% 多项式环 k[x0,...,xn]

% 射影情形多项式环

% 射影空间

% 仿射空间

% 指数集合

% Hilbert函数

% Hilbert多项式

% I中全次数小于s的元素集合

% 齐次多项式

% k(U)[X-U]

% 关于某个项序的首单项式

% 环的谱

% 零点
\DeclareMathOperator{\zero}{Zero}
% 常用多项式集合
\newcommand{\pss}{\pset{P}}
\newcommand{\tss}{\pset{T}}
\newcommand{\sss}{\pset{S}}

\journal{arXiv}

\begin{document}

\begin{frontmatter}

%% Title, authors and addresses

%% use the tnoteref command within \title for footnotes;
%% use the tnotetext command for theassociated footnote;
%% use the fnref command within \author or \address for footnotes;
%% use the fntext command for theassociated footnote;
%% use the corref command within \author for corresponding author footnotes;
%% use the cortext command for theassociated footnote;
%% use the ead command for the email address,
%% and the form \ead[url] for the home page:
%% \title{Title\tnoteref{label1}}
%% \tnotetext[label1]{}
%% \author{Name\corref{cor1}\fnref{label2}}
%% \ead{email address}
%% \ead[url]{home page}
%% \fntext[label2]{}
%% \cortext[cor1]{}
%% \address{Address\fnref{label3}}
%% \fntext[label3]{}

\title{Computing Equilibria of Semi-algebraic Economies Using Triangular Decomposition and Real Solution Classification}

%% use optional labels to link authors explicitly to addresses:
%% \author[label1,label2]{}
%% \address[label1]{}
%% \address[label2]{}

\author[Beihang]{Xiaoliang Li\corref{corr}}
\ead{xiaoliangbuaa@gmail.com} \cortext[corr]{Corresponding author.
Mobile: +86 13751465512. Fax: N/A. Postal address: School of Computer, Dongguan University of Technology,
Songshan Lake, Dongguan, Guangdong 523808, China}

\author[France]{Dongming Wang}
\ead{dongming.wang@lip6.fr}

\address[dgut]{School of Computer, Dongguan University of Technology, Dongguan, Guangdong 523808, China}
%\address[Beihang]{LMIB -- SKLSDE -- School of Mathematics and
%Systems Science, Beihang University, Beijing 100191, China}
\address[France]{Laboratoire d'Informatique de Paris 6, CNRS
-- Universit\'e Pierre et Marie Curie, 4 place Jussieu -- BP 169,
75252 Paris cedex 05, France}

\begin{abstract}
In this paper, we are concerned with the problem of determining the
existence of multiple equilibria in economic models. We propose a
general and complete approach for identifying multiplicities of
equilibria in semi-algebraic economies, which may be expressed as
semi-algebraic systems. The approach is based on triangular
decomposition and real solution classification, two powerful tools
of algebraic computation. Its effectiveness is illustrated by two
examples of application.
\end{abstract}

\begin{keyword}
%% keywords here, in the form: keyword \sep keyword
equilibrium \sep semi-algebraic economy \sep semi-algebraic system \sep  triangular decomposition  \sep real solution classification
%% PACS codes here, in the form: \PACS code \sep code

%% MSC codes here, in the form: \MSC code \sep code
%% or \MSC[2008] code \sep code (2000 is the default)
%\emph{JEL classification}:  C02 \sep C70 \sep D51 \sep D58
\end{keyword}

\end{frontmatter}

\section{Introduction}

The equilibria of an economy are states where the
quantity demanded and the quantity supplied are balanced. In other
words, the values of the variables at the equilibria in the economic
model remain stable (when there is no external influence). For
example, a market equilibrium refers to a condition under which a
market price is established through competition such that the amount
of goods or services sought by buyers is equal to that produced by
sellers. Equilibrium models have been used in various branches of
economics such as macroeconomics, public finance, and international
trade \cite{m90t}.

When analyzing equilibrium models, economists usually assume the
global uniqueness of competitive equilibria. However, the
rationality of this assumption is not yet convincing. For instance,
in ``realistically calibrated'' models, it is still an open question
whether or not the phenomenon of multiple equilibria is likely to
appear. For this question, Gjerstad \cite{g96m} has achieved some
results: he pointed out that the multiplicity of equilibria is
prevalent in a pure exchange economy which has CES utility functions
with elasticities of substitution above 2. Moveover, from a
practical point of view, sufficient assumptions for the global
uniqueness of competitive equilibria are usually too restrictive to
be applied to realistic economic models.

In the economic context, the multiplicity of equilibria of an
economy refers to the number of equilibria of the economic model.
Detecting the multiplicities of equilibria  of economies is an
important issue, as multiplicities may
cause serious mistakes in the analysis of economics models and the
prediction of economic trends. Moreover, that the known sufficient
conditions for uniqueness are not satisfied does not imply that
there must be several competitive equilibria. This means that the
existing theories and results for the models may remain useful when
the sufficient conditions are not satisfied.

Traditional approaches for computing equilibria are almost all based
on numerical computation. They have several shortcomings: first,
numerical computation may encounter the problem of instability,
which could make the results completely useless; second, most
numerical algorithms only search for a single equilibrium and are
nearly infeasible for multiplicity detection. Thus it is desirable
to develop methods which can detect exactly all the equilibria of
applied economic models.

Recently, Kubler and Schmedders \cite{k10c} have considered a
special kind of standard finite Arrow--Debreu exchange economies
with semi-algebraic preferences, which are called
\emph{semi-algebraic exchange economies}. Following the terminology
used by Kubler and Schmedders, by \alert{semi-algebraic economies}
we mean economic models (including for example competitive models
and equilibrium models with strategic interactions) whose equilibria
can be described as real solutions of semi-algebraic systems, say of
the form
\begin{equation}\label{semi-alg}
\left\{
\begin{array}{l}
P_1(u_1, \ldots, u_d,x_1, \ldots, x_n)=0,\\
\qquad\qquad\qquad\vdots \\
P_n(u_1, \ldots, u_d, x_1, \ldots, x_n)=0,\vspace{4pt}\\
Q_1(u_1, \ldots, u_d,x_1, \ldots, x_n)\lessgtr 0,\\
\qquad\qquad\qquad\vdots \\
Q_r(u_1, \ldots, u_d, x_1, \ldots, x_n)\lessgtr 0,
\end{array}
\right.
\end{equation}
where the symbol $\lessgtr$ stands for any of $>$, $\geq$, $<$,
$\leq$, and $\neq$, and $P_i, Q_j$ are polynomials over the field
$\rnum$ of real numbers, with $u_1, \ldots, u_d$ as their parameters
and $x_1, \ldots, x_n$ as their variables. Note that systems from
realistic economies should be zero-dimensional, i.e., their zeros
$(\bar{x}_1, \ldots, \bar{x}_n)$ must be finite in number under any
meaningful specialization of the parameters $u_1, \ldots, u_d$.

Thus for semi-algebraic economies the problem of computing
equilibria may be reduced to that of dealing with the semi-algebraic
system \eqref{semi-alg}. For example, the multiplicity of equilibria
can be detected by determining whether or not the corresponding
system \eqref{semi-alg} has multiple real solutions. This problem
has been solved partially by Kubler and Schmedders \cite{k10c,k10t}
using \grobner bases. The main idea that underlies the remarkable
work of Kubler and Schmedders is to use the method of \grobner bases
to transform the equation part of system \eqref{semi-alg} into an
equivalent set of new equations in a much simpler form, where only
one equation, say $G_1=0$, is nonlinear, yet it is univariate, and
to count the real solutions of the equation part by using Strum's
sequence of $G_1$.

On the other hand, Datta \cite{d10f,d03u} has compared the methods
of \grobner bases and homotopy continuation for computing all
totally mixed Nash equilibria in games. Chatterji and Gandhi
\cite{c10a} have applied computational Galois theory to the problem
of computing Nash equilibria of a subclass of generic finite normal
form games, i.e., the rational payoff games with irrational
equilibria.

The work presented in this paper is based on our observation that
triangular decomposition of polynomial systems \cite{w01e,h03n} and
real solution classification of semi-algebraic systems
\cite{y01c,y05r} may serve
as a good alternative to \grobner bases and Strum sequences for the
computation of equilibria of semi-algebraic economies. This
alternative may lead to new approaches that are theoretically more
general and practically more effective than the approaches developed
by Kubler, Schmedders, and others. The aim of the present paper is
to propose one such approach, which is general and complete, for
identifying the multiplicity of equilibria in semi-algebraic
economies. The proposed approach takes inequalities into
consideration and can give a tighter bound or
even precise number of equilibria, depending on whether the economy
is exactly described by \eqref{semi-alg}, than the existing
approaches mentioned above, which only compute an upper bound for
the number of equilibria because inequality constraints from
realistic economies are usually ignored for simplicity.

The key step of our approach is to decompose the semi-algebraic
system in question into several triangularized semi-algebraic
systems, with the total number of solutions unchanged. Consider for
example the system
\begin{equation}\label{ex:tri-exp}
\left\{
\begin{array}{l}\smallskip
P_1=x_2x_3-1=0,\\ \smallskip
P_2=x_4^2+x_1x_2x_3=0,\\ \smallskip
P_3=x_1x_2x_4+x_3^2-x_2=0,\\
P_4=x_1x_3x_4-x_3+x_2^2=0.
\end{array}
\right.
\end{equation}
Under the variable ordering $x_1<\cdots<x_4$, triangular
decomposition of the polynomial set $\pss=\{P_1, \ldots, P_4\}$
results in two \emph{triangular sets}
\begin{equation*}
  \begin{split}
    \tss_1=[x_1^3+4,x_2^3+1,x_2x_3-1,2\,x_4+x_1^2],\quad
     \tss_2=[x_1,x_2^3-1,x_2x_3-1,x_4],
  \end{split}
\end{equation*}
such that the union of the zero sets of $\tss_1$ and $\tss_2$ is
same as the zero set of $\pss$. The zeros of the triangular sets
$\tss_1$ and $\tss_2$ may be computed successively, which is easier
than computing the zeros directly from $\pss$. Triangular
decomposition as such is used in the first stage of our approach to
preproccess the equation part of \eqref{semi-alg}.

The rest of the paper is structured as follows. In Section
\ref{sec:no-par}, we first show how to count equilibria of
semi-algebraic economies without parameters by means of a simple
example and then describe a complete method for the counting. In
Section \ref{sec:par}, a method based on real solution
classification is presented to deal with semi-algebraic economies
with parameters. In Section \ref{sec:exp}, we demonstrate the
effectiveness of our methods using two examples of application. The
paper is concluded with some remarks in Section \ref{sec:con}.

\section{Economies Without Parameters}\label{sec:no-par}

%Note that System \eqref{semi-alg} derived from
%realistic economies should be zero-dimensional, i.e.\ its
%zeros must be finite at any meaningful specification of the parameters (if there is any).
%As we mainly focus on detecting the multiplicity of economic equilibria,

From now on we denote by $\uvar$ and $\xvar$ the parameters $u_1,
\ldots, u_d$ and the variables $x_1, \ldots, x_n$ respectively in
system \eqref{semi-alg}. In this section, we consider the simpler
case when $\uvar$ do not occur in \eqref{semi-alg}.

\begin{problema}
Assume that the parameters $\uvar$ are not present in system
\eqref{semi-alg}. Count all the distinct real solutions of
\eqref{semi-alg}.
\end{problema}

The method that we will present for solving this problem extends the
approach of Kubler and Schmedders \cite{k10c,k10t}. It can
systematically handle economies with inequality conditions (which
are fairly prevalent in practical applications).

\subsection{Triangular Decomposition Revisited}

We recall some standard notations and algorithms used for triangular
decomposition of polynomial systems, which play a fundamental role
in our approach to be proposed.

Let the variables be ordered as $x_1 < \cdots < x_n$. An ordered set
$[T_1, \ldots, T_r]$ of non-constant polynomials is called a
\emph{triangular set} if the leading variable of $T_i$ is smaller
than that of $T_j$ for all $i<j$, where the \emph{leading variable}
of $T_i$ is the variable with biggest index occurring in $T_i$. For
example, $[x_1-2,(x_1^2-4)x_3^3-x_2]$ is a triangular set.
%\modified{A polynomial $F$ is said to be \alert{reduced} w.r.t.\ a triangular set $\pset{T}=[T_1, \ldots, T_r]$
% if the degree of $F$ w.r.t.\ the leading variable of $T_i$
% is strictly less than that of $T_i$ for all $i=1,\ldots,r$. Moreover, a triangular set
%$\pset{T}$ is said to be \alert{reduced} if every $T_i$ is reduced
%w.r.t.\ $[T_1,\ldots,T_{i-1}]$ for $i=2,\ldots,r$.}

Let $\pset{P}$ and $\pset{Q}$ be two sets of multivariate
polynomials with coefficients in the field $\qnum$ of rational
numbers. We denote by $\zero(\pset{P})$ the set of all common zeros
(in some extension field of $\qnum$) of the polynomials in $\pss$
and by $\zero(\pset{P}/\pset{Q})$ the subset of $\zero(\pset{P})$
whose elements do not annihilate any polynomial in $\pset{Q}$.
%If $\pset{Q}$ contains just one polynomial, say $Q$, one can simply use $\zero(\pset{P}/Q)$.

Any multivariate polynomial can be viewed as a univariate polynomial
in its leading variable. We use $\ini(\pss)$ to denote the set of
leading coefficients of all the polynomials in $\pss$, viewed as
univariate polynomials in their leading variables. Such leading
coefficients are called \emph{initials}.

%The leading coefficient of $F$, viewed as a univariate polynomial in  its leading variable,
%is called the \emph{initial} of $F$, and denoted by $\ini(F)$.

\begin{theorem}\label{thm:td}
There are algorithms which can decompose any given polynomial set
$\pss$ into finitely many triangular sets
$\pset{T}_1,\ldots,\pset{T}_k$ with different properties such that
\begin{equation}\label{eq:tri-decom}
\zero(\pss)=\bigcup_{i=1}^k \zero(\pset{T}_i/\ini(\pset{T}_i)).
\end{equation}
\end{theorem}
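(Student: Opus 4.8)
The plan is to exhibit one such algorithm explicitly --- a splitting procedure in the spirit of Ritt and Wu --- and to verify both the zero-set identity \eqref{eq:tri-decom} and termination. I would first fix a \emph{rank} on non-constant polynomials: $P$ has lower rank than $Q$ if $\lv(P)<\lv(Q)$, or $\lv(P)=\lv(Q)$ and $\deg(P,\lv(P))<\deg(Q,\lv(Q))$. This extends lexicographically to a partial order on ascending (triangular) sets, and the classical fact invoked throughout is that this order on ranks admits no infinite strictly descending chain. Given $\pss$, the procedure discards nonzero constants, returns the empty decomposition if $\pss$ contains a nonzero constant, and otherwise selects a \emph{basic set} $\pset{B}=[B_1,\dots,B_m]\subseteq\pss$, i.e.\ a triangular subset of $\pss$ of minimal rank.

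The core step is pseudo-reduction. For each $P\in\pss\setminus\pset{B}$ compute $R_P=\prem(P,\pset{B})$, so that $I_P\cdot P=\sum_j Q_jB_j+R_P$ where $I_P$ is a product of powers of the initials of $\pset{B}$. If some $R_P\neq 0$, set $\pss'=\pss\cup\{R_P: P\in\pss\setminus\pset{B}\}$; then $\zero(\pss)=\zero(\pss')$ (the pseudo-remainder relations show each side is contained in the other), and, since every nonzero $R_P$ is reduced with respect to $\pset{B}$, a basic set of $\pss'$ has strictly lower rank than $\pset{B}$, so we may recurse on $\pss'$. If instead $R_P=0$ for all such $P$, then $\pset{B}$ is a characteristic set of $\pss$; writing $\ini(\pset{B})=\{I_1,\dots,I_m\}$, the pseudo-remainder relations give the finite splitting
\[
\zero(\pss)=\zero(\pset{B}/\ini(\pset{B}))\;\cup\;\bigcup_{j=1}^{m}\zero(\pss\cup\{I_j\}),
\]
in which $\pset{B}$ itself is one of the output triangular sets and the recursion continues on each $\pss\cup\{I_j\}$. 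Collecting these identities along the recursion tree yields \eqref{eq:tri-decom}.

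The step I expect to be the main obstacle is termination. One must check that every recursive call is made on a polynomial set whose basic set has strictly smaller rank than $\pset{B}$: for $\pss'$ this follows from reducedness of the remainders, and for $\pss\cup\{I_j\}$ from the standard rank-lowering property of adjoining the initial of a characteristic-set element --- and here the degree component of the rank is precisely what forces the decrease, since adding $I_j$ need not lower any leading variable, so the lexicographic comparison of the full rank vectors (rather than leading variables alone) is essential. Combined with well-foundedness of the rank order (Ritt's lemma), this bounds the recursion tree, and hence only finitely many $\tss_1,\dots,\tss_k$ are produced.

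Finally, the phrase ``with different properties'' refers to the well-known refinements --- regular chains, squarefree or normal triangular sets, simple systems --- each obtained by layering further gcd- and pseudo-division-based splittings on top of the Ritt--Wu construction above, in every case preserving an identity of the shape \eqref{eq:tri-decom}; I would treat these as routine post-processing and cite the standard references \cite{w01e,h03n}.
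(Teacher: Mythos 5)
Your proposal is correct and is exactly the Wu--Ritt characteristic-set construction (basic set, pseudo-reduction, splitting on initials, termination via well-foundedness of Ritt's rank order), which is precisely the algorithm the paper itself points to for this theorem; the paper gives no proof of its own, only citations to Ritt and Wu. The one detail worth making explicit is that the basic set must be chosen as an ascending set in Ritt's sense (each $B_j$ reduced with respect to $B_1,\dots,B_{j-1}$, as the greedy minimal-rank construction guarantees), since that is what makes each initial $I_j$ reduced with respect to $\pset{B}$ and hence makes the rank-lowering step for $\pss\cup\{I_j\}$ go through.
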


Among the algorithms pointed out by the above theorem, the best
known is Wu--Ritt's algorithm based on the computation of
characteristic sets, developed by Wu \cite{w86b,w86z} from the work
of Ritt \cite{r50d} in differential algebra. For example, the
polynomial set
$$\pset{P}=[xy^2+z^2,xz+y]$$
may be decomposed by using Wu--Ritt's algorithm with $x<y<z$ into
four triangular sets
\begin{equation}\label{ex:charser}
  \begin{split}
    & \pset{T}_1=[(x^3+1)y^2, xz+y],\quad
    \pset{T}_2=[x^2-x+1, xz+y], \\
    & \pset{T}_3=[x+1, z-y], \qquad\qquad\!
      \pset{T}_4=[x, y, z^2].
  \end{split}
\end{equation}

It is not guaranteed that $\zero(\tss/\ini(\tss))\neq\emptyset$ for
all triangular set $\tss$. For example, with
$$\tss=[x^2-u, y^2+2\,xy+u,(x+y)z+1]$$
and $u<x<y<z$, it can be easily proved that
$\zero(\tss/\ini(\tss))=\emptyset$. We can impose additional
conditions to obtain triangular sets of other kinds with better
properties. Typical examples of such triangular sets are
\emph{regular sets} \cite{w00c} (also known as regular chains
\cite{k93g} and proper ascending chains \cite{y94s}), \emph{simple
sets} \cite{w98d}, and \emph{irreducible triangular sets}
\cite{w86b,w01e}.

A triangular set $[T_1, \ldots, T_r]$ is said to be \emph{regular}
or called a \emph{regular set} if no {regular zero of
$\tss_i$} annihilates
the initial of $T_{i+1}$ for all $i=1,\ldots,r-1$, {where
$\tss_i=[T_1,\ldots,T_i]$ and a regular zero of $\tss_i$ is such a
zero of $\tss_i$ in which the variables other than the leading
variables of $T_1,\ldots,T_i$ are not specialized to concrete
values.} For example, {with $u<x<y$ the triangular set
${\hat{\tss}}=[x^2-u^2,(x+u)y+1]$ is not regular, while
$\bar{\tss}=[x^2-u^2,xy+1]$ is. Observe that the initials of the
second polynomials in both $\hat{\tss}$ and $\bar{\tss}$ vanish
at the zeros of $x^2-u^2$ when $u$ is specialized to $0$.} The
reader may refer to \cite{w00c,w01e,h03n} for formal definitions of
regular sets and \cite{k93g,y94s,w00c,m00t} for effective algorithms
that decompose any given polynomial set into finitely many regular
sets.

{A regular set $[T_1,\ldots,T_r]$ is called a \emph{simple
set} or an \emph{irreducible triangular set}, respectively, if every
$T_{i}$ is squarefree or irreducible at any regular zero of
$[T_1,\ldots,T_{i-1}]$ for {$i=1,\ldots,r$}. For example, the regular
set $\bar{\tss}$ given above} is also a simple set, but it is not
irreducible. Simple sets and irreducible triangular sets have many
nice properties (of which some are about their saturated ideals, see
\cite{w01e,h03n,l10d}). Algorithms for decomposing polynomial sets
into simple sets or irreducible triangular sets may be found in
\cite{w86b,w98d,w01e,l10d,m13d}.

The algorithms for triangular decomposition proposed by the second
author \cite{w93e,w00c,w98d} appear to be more general than other
available ones. They can be used to decompose any given polynomial
system $[\pset{P},\pset{Q}]$ into finitely many triangular systems
$[\pset{T}_1,\pset{S}_1],\ldots,[\pset{T}_k,\pset{S}_k]$ with
different properties such that
\begin{equation}\label{eq:tri-sys}
\zero(\pss/\pset{Q})=\bigcup_{i=1}^k \zero(\pset{T}_i/\pset{S}_i),
\end{equation}
where $[\pset{T}_i,\pset{S}_i]$ could be fine triangular systems
\cite{w93e}, regular systems \cite{w00c}, or simple systems
\cite{w98d}, corresponding to triangular sets, regular sets, or
simple sets respectively. The interested reader may consult the
above-cited references for formal definitions, properties, and
algorithms.

Triangular decomposition discussed above may be effectively used in
our approach for counting real solutions of semi-algebraic systems.
Note that for the decomposition \eqref{eq:tri-decom} or
\eqref{eq:tri-sys}, there is no guarantee that
$$\zero(\pset{T}_i/\ini(\pset{T}_i))\cap \zero(\pset{T}_j/\ini(\pset{T}_j))=\emptyset
~~\text{or}~~\zero(\pset{T}_i/\sss_i)\cap \zero(\pset{T}_j/\sss_j)=
\emptyset$$ for $i\neq j$. For the triangular sets in
\eqref{ex:charser}, it is easy to verify that
$(\frac{1+\rm{i}\sqrt{3}}{2},0,0)$ is in both
$\zero(\pset{T}_1/\ini(\pset{T}_1))$ and
$\zero(\pset{T}_2/\ini(\pset{T}_2))$. This problem may cause some
trouble for counting distinct real zeros, but it can be solved,
e.g., by using the technique given in \cite{y92c} {(see
also \cite{w98d}).}

A triangular set in which all polynomials other than the first are
linear with respect to (w.r.t.) their corresponding leading
variables is said to be \alert{quasi-linear}. For example, in
\eqref{ex:charser} $\tss_1$, $\tss_2$, and $\tss_3$ are all
quasi-linear, but $\tss_4$ is not. The (real) zeros of quasi-linear
triangular sets may be determined by analyzing essentially the first
polynomials in the triangular sets. A triangular system
$[\tss,\sss]$ is said to be \emph{quasi-linear} if $\tss$ is
quasi-linear. Quasi-linearization is a key step in transforming an
arbitrary semi-algebraic system into an equivalent semi-algebraic
system in which the equation polynomials form a quasi-linear
triangular set.

\begin{theorem}\label{th:qualin}
  Let $\tss=[T_1(\uvar,y_1),\dots,T_r(\uvar,y_1,\dots,y_r)]$
  be a regular set in $\qnum[\uvar,y_1,\dots,y_r]$ and
  $c_2,\dots,c_r$ be a sequence of $r-1$ randomly chosen {integers}.
  Then the polynomial set $\tss^*$ obtained from $\tss$ by replacing
  $y_1$ with $y_1+c_2y_2+\cdots+c_ry_r$ can be decomposed over
  $\qnum(\uvar)$, with probability $1$, into finitely many quasi-linear triangular sets $\tss_1,\ldots,\tss_k$
  w.r.t.\ the variable ordering $y_1<\cdots<y_r$, such that
\begin{equation*}
\zero(\tss^*)=\bigcup_{i=1}^k \zero(\pset{T}_i/\ini(\pset{T}_i)).
\end{equation*}
\end{theorem}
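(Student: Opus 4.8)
The plan is to recast the statement as one about a fixed zero‑dimensional variety over the coefficient field $\field{K}:=\qnum(\uvar)$, to show that a generic shear makes $y_1$ a separating coordinate of that variety, and finally to exploit separation together with a zero‑count to force quasi‑linearity. First I would record some standard facts about $\tss$: since the leading variables of the regular set $\tss=[T_1,\dots,T_r]$ are exactly $y_1,\dots,y_r$, over $\field{K}$ both $\langle\tss\rangle$ and $\sat(\tss)$ are zero‑dimensional ideals of $\field{K}[y_1,\dots,y_r]$, $\zero(\tss/\ini(\tss))$ is nonempty, and along every zero of $\tss$ the initials do not vanish, so $V:=\var(\langle\tss\rangle)=\zero(\tss/\ini(\tss))$ is a nonempty finite subset of $\overline{\field{K}}^{\,r}$. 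The substitution $y_1\mapsto y_1+c_2y_2+\cdots+c_ry_r$ is, for every integer vector $(c_2,\dots,c_r)$, a $\field{K}$‑linear automorphism $\sigma$ of $\field{K}[y_1,\dots,y_r]$; on points its inverse is the linear bijection $\psi\colon(\beta_1,\dots,\beta_r)\mapsto(\beta_1-c_2\beta_2-\cdots-c_r\beta_r,\,\beta_2,\dots,\beta_r)$. Hence $\langle\tss^*\rangle=\sigma(\langle\tss\rangle)$ is again zero‑dimensional and $\zero(\tss^*)=\psi(V)$ is finite, so the simple‑set version of the triangular decomposition in Theorem~\ref{thm:td}, applied to $\tss^*$ over $\field{K}$, produces simple sets $\tss_1,\dots,\tss_k$ with $\zero(\tss^*)=\bigcup_{i=1}^k\zero(\tss_i/\ini(\tss_i))$. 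What is left is to choose the $c_i$ so that every $\tss_i$ is quasi‑linear.

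Second I would fix $(c_2,\dots,c_r)$ making $y_1$ injective on $\zero(\tss^*)=\psi(V)$. The $y_1$‑coordinate of $\psi(P)$, for $P=(\beta_1,\dots,\beta_r)\in V$, is $\beta_1-c_2\beta_2-\cdots-c_r\beta_r$, so for two distinct points $P,P'\in V$ the images $\psi(P),\psi(P')$ share the same $y_1$‑coordinate exactly when $(\beta_1-\beta_1')-\sum_{i=2}^{r}c_i(\beta_i-\beta_i')=0$; since $P\ne P'$ this is a non‑trivial affine‑linear equation in $(c_2,\dots,c_r)$ with coefficients in $\overline{\field{K}}$. Therefore the ``bad'' parameter vectors --- those for which $y_1$ is not injective on $\psi(V)$ --- lie in a finite union of proper affine subspaces of $\overline{\field{K}}^{\,r-1}$; by the Schwartz--Zippel estimate a random integer choice of $c_2,\dots,c_r$ avoids this set, which is the precise content of ``with probability~$1$'' here. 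From now on $y_1$ is injective on $\zero(\tss^*)$.

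The third and crucial step is to conclude. Fix one $\tss_i=[U_1,\dots,U_r]$ with $\zero(\tss_i/\ini(\tss_i))\ne\emptyset$ (otherwise discard it); then $\tss_i$ has exactly $r$ polynomials, with leading variables $y_1,\dots,y_r$, for otherwise $\zero(\tss_i/\ini(\tss_i))$ would be infinite. Because $\tss_i$ is a simple set, each $U_j$ is squarefree at every regular zero of $[U_1,\dots,U_{j-1}]$ and the initials do not vanish along $\zero(\tss_i/\ini(\tss_i))$, so $\zero(\tss_i/\ini(\tss_i))$ has exactly $\prod_{j=1}^{r}\deg_{y_j}(U_j)$ elements. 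On the other hand $\zero(\tss_i/\ini(\tss_i))\subseteq\zero(\tss^*)$, on which (Step~2) $y_1$ is injective, while the $y_1$‑coordinate of each point of $\zero(\tss_i/\ini(\tss_i))$ is one of the at most $\deg_{y_1}(U_1)$ roots of $U_1$; hence $\prod_{j=1}^{r}\deg_{y_j}(U_j)\le\deg_{y_1}(U_1)$, i.e.\ $\prod_{j=2}^{r}\deg_{y_j}(U_j)\le 1$. Thus $\deg_{y_j}(U_j)=1$ for every $j\ge 2$, that is, $\tss_i$ is quasi‑linear, which completes the proof.

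The hard part is the third step, and in particular the need to work with \emph{simple} (squarefree) sets rather than arbitrary regular sets: separation of $y_1$ alone does not force quasi‑linearity --- a non‑reduced factor such as $(y_2-y_1)^2$ contributes degree $2$ in $y_2$ while adding only one point over each root of $U_1$, so the exact zero‑count $\prod_j\deg_{y_j}(U_j)$, on which the argument rests, is available precisely for simple sets. The parameters $\uvar$ require no new idea: one works throughout over $\field{K}=\qnum(\uvar)$ instead of $\qnum$, and ``probability~$1$'' is meant in the usual Schwartz--Zippel sense for a random choice of the integers $c_i$.
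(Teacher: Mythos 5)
Your proof is correct, and it reaches the conclusion by a genuinely different final step than the paper. The first two stages coincide with the paper's argument: decompose the sheared set $\tss^*$ into simple sets over $\qnum(\uvar)$, and note that for each pair of distinct zeros the coincidence of sheared $y_1$-coordinates is a proper affine-linear condition on $(c_2,\dots,c_r)$, so a random integer choice separates all $y_1$-coordinates with probability $1$ --- this is exactly the paper's ``hyperplane or empty set'' argument with $H(z_r,\dots,z_2)$. Where you diverge is the conclusion. The paper normalizes and reduces each simple set to a monic triangular set, identifies it as the reduced lexicographic Gr\"obner basis of its ideal, notes that this ideal is radical because the set is simple, and then invokes the Shape Lemma with $y_1$ as separating element to read off the form $[G_{i1}(y_1),\,y_2-G_{i2}(y_1),\dots]$. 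You instead exploit the exact zero count $\prod_j\deg_{y_j}(U_j)$ of a simple set with nonvanishing initials: injectivity of $y_1$ on the zero set caps this count by the number of distinct roots of the squarefree $U_1$, namely $\deg_{y_1}(U_1)$, forcing $\deg_{y_j}(U_j)=1$ for all $j\ge 2$. Your route is more elementary --- it avoids normalization, Gr\"obner bases, radicality and the Shape Lemma entirely --- and it correctly isolates where squarefreeness is indispensable (a repeated factor would break the count). The trade-off is that the paper's argument yields the stronger shape position in which each tail $G_{ij}$ is univariate in $y_1$, whereas your $U_j$ are merely linear in $y_j$ with coefficients in $y_1,\dots,y_{j-1}$; but that is precisely the definition of quasi-linearity used in the theorem, so nothing is lost.
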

\begin{proof}
  Let $\tss^*$ be decomposed into $k$ simple sets
  $$\pset{T}_i^*=[T_{i1}^*(\uvar,y_1),\dots,T_{ir}^*(\uvar,y_1,\dots,y_r)],\quad
  i=1,\ldots, k,$$
according to Theorem~\ref{thm:td}. By means of normalization (using,
e.g., \cite[Algorithm 4]{l08a}) and pseudo-division, each
$\pset{T}_i^*$ may be transformed into a normal and reduced simple
set
$\pset{T}_i'=[T_{i1}'(\uvar,y_1),\dots,T_{ir}'(\uvar,y_1,\dots,y_r)]$
  such that $\bases{\pset{T}_i^*}=\bases{\pset{T}_i'}$ and the degree of $T_{ij}^*$ in
  $y_j$ remains unchanged (see \cite{w01e} for the definitions of \emph{normal} triangular set and \emph{reduced} triangular set),
where $\bases{\pset{T}_i^*}$ denotes the ideal in
$\qnum(\uvar)[y_1,\dots,y_r]$ generated by $\pset{T}_i^*$.  Let
  $$\pset{T}_i=[T_{i1}'/\ini(T_{i1}'),\ldots,T_{ir}'/\ini(T_{ir}')].$$
  Note that under the lexicographical term order, the leading monomials of any two different polynomials in
 $\pset{T}_i$ are relatively prime.
 Thus $\pset{T}_i$ is a \grobner basis of
 $\bases{\pset{T}_i'}$ by Proposition~4 in \cite[section 2.9]{c97i}.
  The monicness of the
 polynomials in $\pset{T}_i$ is obvious. As $\pset{T}_i$ is reduced,
 $\pset{T}_i$ is the reduced \grobner basis of $\bases{\pset{T}_i'}$.
 Furthermore, from \cite[Theorem 3.3]{l10d} we know that $\bases{\pset{T}_i}$ is a radical ideal
 (because $\tss_i$ is a simple set).

{Let
$(\val{y}_{11},\ldots,\val{y}_{r1}),\ldots,(\val{y}_{1s},\ldots,\val{y}_{rs})$
be all the (distinct) zeros of $\bases{\pset{T}}$ in the algebraic
closure $K$ of $\qnum(\uvar)$. Then for any $\mu\neq \nu$,
$$H(z_r,\ldots,z_2)=(\val{y}_{r\mu}-\val{y}_{r\nu})z_r+\cdots+(\val{y}_{2\mu}-\val{y}_{2\nu})z_2+(\val{y}_{1\mu}-\val{y}_{1\nu})=0$$
defines a  hyperplane or an empty set of $K^{r-1}=\{(z_r,\ldots,z_2)|\, z_i\in K\}$.
As $c_r,\dots,c_2$ are randomly chosen integers, the probability that
$H(c_r,\ldots,c_2)\neq 0$ (i.e., {$(c_r,\ldots,c_2)$ is not among the integer points in the hyperplane}) is $1$. Note that
$$\val{y}_{1\mu}+c_2\val{y}_{2\mu}+\cdots+c_r\val{y}_{r\mu}=y_{1\mu}^*,\quad
\val{y}_{1\nu}+c_2\val{y}_{2\nu}+\cdots+c_r\val{y}_{r\nu}=y_{1\nu}^*$$
are the $y_1$-coordinates of two zeros of some $\bases{\pset{T}_i}$
and $\bases{\pset{T}_j}$. It is thus with probability $1$ that
$y_{1\mu}^*-y_{1\nu}^*\neq 0$ for any $\mu\neq \nu$. Therefore, with
probability $1$ the $y_1$-coordinates of the zeros of all
$\bases{\pset{T}_i}$ are distinct.}

By the Shape Lemma \cite{s02s}, each $\pset{T}_i$ must be of the form
 $[G_{i1}(y_1),y_2-G_{i2}(y_1),\ldots,y_r-G_{ir}(y_1)]$,
 where $G_{ij}$ are polynomials over $\qnum(\uvar)$. Thus $\pset{T}_i$
 is quasi-linear and the proof is complete.
\end{proof}

Triangular sets produced by triangular decomposition are often, but
not always, quasi-linear. Among the four triangular sets in
\eqref{ex:charser}, only $\tss_4$ is not quasi-linear. {One
may obtain quasi-linear triangular sets by means of linear
transformation with randomly chosen integers $c_2,\dots,c_r$
according to the above theorem. As the probability of success with
one trial is $1$, the} quasi-linearization technique is effective.

\subsection{Illustrative Example}\label{sec:ex-no-par}

Before describing the method, we provide a simple example to
illustrate our general approach for solving Problem A. Consider the
semi-algebraic system
\begin{equation*}\label{ex:all-part}
\left\{
\begin{array}{l}
x^3-20\,y^2=0,\\
y^2-2\,x-1=0, \\
x-y\neq 0,\\
2\,x-y\geq 0,\\
y >0.
\end{array}
\right.
\end{equation*}
It is easy to see that the number of distinct real solutions of this
system is equal to the sum of those of the following two systems:
\begin{equation}\label{ex:all-pre}
\left\{
\begin{array}{l}
x^3-20\,y^2=0,\\
y^2-2\,x-1=0, \\
x-y\neq 0,\\
2\,x-y = 0,\\
y >0,
\end{array}
\right.
\end{equation}
\begin{equation}\label{ex:all-part}
\left\{
\begin{array}{l}
x^3-20\,y^2=0,\\
y^2-2\,x-1=0, \\
x-y\neq 0,\\
2\,x-y> 0,\\
y >0.
\end{array}
\right.
\end{equation}
These two systems may be treated similarly, so we only consider
\eqref{ex:all-part} in what follows. The number of distinct real
solutions of the system can be determined in five steps.

\ste{A1.} Let the variables be ordered as $x<y$ and decompose the
set of equation polynomials
\begin{equation*}\label{ex:pol-part}
\pset{F}=\{x^3-20\,y^2,~ y^2-2\,x-1\}
\end{equation*}
into triangular sets. The process of triangular decomposition for
$\pset{F}$ is trivial and one can easily obtain a triangular set
\begin{equation}\label{eq:sim-sys}
\tss=[x^3-40\,x-20, ~y^2-2\,x-1]
\end{equation}
such that $\zero(\tss)=\zero(\pset{F})$.

Substituting $x$ in $\tss$ by $x+y$ and decomposing the resulting
set $\tss^*$ into simple sets under the same ordering $x<y$, we
obtain $\tss_1=[T_1, T_2]$ with
\begin{equation*}\label{eq:lin-sys}
\begin{split}
 &T_1=x^6-83\,x^4-360\,x^3+1083\,x^2+1320\,x+359,\quad T_2=Iy+J,\\
 &I=3\,x^2+8\,x-35,\quad J=x^3+6\,x^2-33\,x-18.
\end{split}
\end{equation*}
Now $T_1$ is univariate, $T_2$ is linear in $y$, and
$\zero(\tss_1/\{I\})=\zero(\tss^*)$.

The linear transformation $x\rightarrow x+y$ above transforms the
inequality constraints $x-y\neq 0$, $2\,x-y> 0$, and $y >0$ in
\eqref{ex:all-part} into $x\neq 0$, $2\,x+y > 0$, and $y>0$
respectively. Hence the number of real solutions of
\eqref{ex:all-part} is equal to that of the following system
\begin{equation*}
\left\{
\begin{array}{l}
T_1=0,\\
T_2=0, \\
I\neq 0,\\
x\neq 0,\\
2\,x+y > 0,\\
y >0.
\end{array}
\right.
\end{equation*}

\ste{A2.} Solving $T_2=0$ for $y$ yields $y=-J/I$. Substituting this
solution into the inequality constraints in the above system, we
obtain the following system
\begin{equation*}
\left\{
\begin{split}
&T_1=0,\\
&I\neq 0,\\
&x\neq 0,\\
&2\,x-J/I > 0,\\
&-J/I >0.
\end{split}
\right.
\end{equation*}

The constraint $-J/I>0$ can be replaced by the equivalent inequality
$G=-JI>0$. Similarly, $2\,x-J/I > 0$ can be replaced by
$H=2\,xI^2-JI>0$.

\ste{A3.} Further replace the constraints $G>0$ and $H>0$
respectively by
$$G'=\rem(G,F)=-3\,x^5-26\,x^4+86\,x^3+528\,x^2-1011\,x-630>0,$$
and
$$H'=\rem(H,F)=15\,x^5+70\,x^4-206\,x^3-592\,x^2+1439\,x-630>0,$$
where $\rem(P,F)$ denotes the \emph{remainder} of $P$ divided by
$F$. Then the problem is reduced to counting the real solutions of
the following semi-algebraic system in a single variable $x$:
\begin{equation}\label{ex:uni}
\left\{
\begin{array}{l}
F=0,\\
H'>0,\\
G'>0,
\end{array}
\right.
\end{equation}
where $F$ and $G'$ have no common zeros, and so do $F$ and $H'$.

\ste{A4.} In order to count the real solutions of \eqref{ex:uni}, we
isolate the real zeros of $G'\cdot H'$ by rational intervals using
an available algorithm. For instance, application of the modified
Uspensky algorithm \cite{c83r} may yield the following sorted
sequence of intervals
\begin{eqnarray*}
% \nonumber to remove numbering (before each equation)
&[-16, -8],~ [-5, -5],~ [-9/2, -4],~ [-1, -1/2],~ [1/2, 3/4],~ [1,
3/2],~ [2, 5/2],~ [3, 4].
\end{eqnarray*}
These closed intervals do not intersect with each other, and each of
them contains one and only one distinct real zeros of $G'$ or $H'$.
Moreover, by Sturm's theorem \cite{s02s} or simply using the $\sf
sturm$ function in Maple, we can prove that all the intervals cover
no real zero of $F$.

\ste{A5.} The real zeros of $F$ must be in
\begin{eqnarray*}
&(-\infty, -16),~ (-8, -5),~ (-5, -9/2),~ (-4, -1),\\[2pt]
&(-1/2, 1/2),~ (3/4, 1),~ (3/2, 2),~ (5/2, 3),~ (4, +\infty),
\end{eqnarray*}
the complement of the intervals given in the preceding step. In each
of these open intervals, the signs of $G'$ and $H'$ are invariant,
and can be determined by simply testing them at a sample point in
the interval. For example, to determine the sign of $G'$ on
$(-\infty, -16)$, we may compute $G'(-16-1)$ to get $1834656>0$.
Thus $G'$ is positive at every point in $(-\infty, -16)$. Proceeding
in this way for each interval, we can conclude that $G'$ and $H'$
are positive on and only on
\begin{equation*}
(-5, -9/2),\quad (5/2, 3).
\end{equation*}

Finally, applying the $\sf sturm$ function to count the real zeros
of $F$ on the above two open intervals, one finds that their numbers
are $0$ and $1$ respectively. In conclusion, the original
semi-algebraic system \eqref{ex:all-part} has only one real
solution.

\subsection{General Method}\label{sec:par-alg}

In this section, we formulate the steps of the illustrative example
to a general method for counting equilibria of semi-algebraic
economies without parameters. As we have explained, the problem may
be reduced to counting distinct real solutions of \eqref{semi-alg},
where the parameters $\uvar$ are not present. Since an inequality
constraint like $P\geq 0$ can be split into $P=0$ or $P>0$, we only
need to consider semi-algebraic systems of the form
\begin{equation}\label{eq:stand-sys}
\left\{
\begin{array}{l}\smallskip
F_1(\xvar)=0,\ldots,F_n(\xvar)=0,\\ \smallskip
N_1(\xvar)\neq 0,\ldots,N_s(\xvar)\neq 0,\\
P_1(\xvar)> 0,\ldots,P_t(\xvar)> 0.
\end{array}
\right.
\end{equation}
Let $\pset{F}=\{F_1,\ldots,F_n\}$ and $\pset{N}=\{N_1,\ldots,N_s\}$
and the variables be ordered as $x_1<\cdots<x_n$. The method that
provides an effective and complete solution to the problem consists
of the following steps.

  \step{A1.} Decompose the polynomial system $[\pset{F}, \pset{N}]$
  into finitely many triangular systems $[\tss_1,\sss_1],\ldots,[\tss_k,\sss_k]$ such that
  $$\zero(\pset{F}/\pset{N})=\bigcup_{i=1}^k\zero(\tss_i/\sss_i)~~\text{and}~~\zero(\tss_i/\sss_i)\cap\zero(\tss_j/\sss_j)=\emptyset,~i\neq j.$$
We may assume that all $\tss_i$ are quasi-linear, for otherwise
$\tss_i$ may be made quasi-linear by the quasi-linearization process
using appropriate linear transformations (see
Theorem~\ref{th:qualin} and remarks thereafter). Then the problem is
reduced to counting the distinct real zeros in each
$\zero(\tss_i/\sss_i)$ which satisfy $P_l>0$.

%\medskip\noindent
  \step{A2.} For each $i$, let
  $$\tss_i=[T_{i1}(x_1),\dots,T_{in}(x_1,\ldots,x_n)]$$
  and $S_i$ be the product of all polynomials in $\sss_i$.
  Solve $T_{ij}=0$ for $x_j$, $j=n,\ldots,2$, and
  substitute the solutions successively into $S_i$ and $P_l$ to obtain
  rational functions $A_{i}/A'_{i}$ and $B_{il}/B'_{il}$ respectively,
  where $A_{i},A'_{i},B_{il},B'_{il}$ are all univariate polynomials in $x_1$.
  The problem is further reduced to counting the distinct real solutions of the semi-algebraic system
  \begin{equation}\label{sysuni}
\left\{
\begin{array}{l}\smallskip
T_{i1}=0,\\ \smallskip
A_{i}\neq 0,\\
B^*_{il}=B_{il}\cdot B'_{il}>0,~l=1,\ldots,t,
\end{array}
\right.
\end{equation}
in one variable $x_1$ for $i=1,\ldots, k$.

%\medskip\noindent
  \step{A3.} Simplify system \eqref{sysuni}, for example, by removing from $T_{i1}$ its common
  factors with every $A_{i}$ and $B^*_{il}$ to obtain $T'_{i1}$ and
  then replacing each $B^*_{il}$ with $C_{il}=\rem(B^*_{il},T'_{i1})$.
  In this way, we arrive at the system
  \begin{equation}\label{systuni}
  T'_{i1}=0,\quad C_{il}>0,~l=1,\ldots,t,
  \end{equation}
which is equivalent to \eqref{sysuni} for $i=1,\ldots, k$.

%\medskip\noindent
  \step{A4.} Isolate the real zeros of each $C_{il}$ in \eqref{systuni} using, e.g., the
  modified Uspensky algorithm \cite{c83r} to obtain a sequence of closed intervals
  $[a_1,b_1],\ldots,[a_m,b_m]$, such that
  \begin{itemize}
    \item $a_i,b_i$ are all rational numbers,
    \item $a_1\leq b_1<a_2\leq b_2<\cdots<a_m\leq b_m$,
    \item $[a_i,b_i]\cap[a_j,b_j]=\emptyset~\text{for}~i\neq j$,
    \item each $[a_i,b_i]$ contains one and only one real zero of some $C_{il}$,
    \item every $[a_i,b_i]$ covers no real zero of $T'_{i1}$.
  \end{itemize}

%\noindent
 \step{A5.}
 In each connected subset of the complement $(-\infty,a_1)\cup (b_1,a_2)\cup\cdots\cup
 (b_m,+\infty)$ of the above intervals, the sign of each $C_{il}$ is invariant and
 can be determined by computing the value of $C_{il}$ at a sample point in the subset.
 From the complement, select open intervals on which all the $C_{il}$ are positive.
 Finally, apply Sturm's theorem to count the real zeros of $T'_{i1}$ on those selected
 intervals, and sum them up.

\medskip
The correctness of the above method is quite obvious. Although the
method is effective and may be easily understood and implemented,
the process of quasi-linearization is time-consuming, in particular
for systems with polynomials of high degree. Xia and Hou \cite{x02c}
proposed a direct method, which has the same functionality as ours,
but does not need to make triangular sets quasi-linear. The method
of Xia and Hou is also based on triangular decomposition and it
works by recursively computing the so-called near roots of
polynomials in triangular sets.

\section{Economies with Parameters}\label{sec:par}

If parameters appear in system \eqref{semi-alg}, the number of real
solutions of the system may change along with the variation of
parameters. In this case, the method presented in the previous
section cannot be directly applied to the analysis of equilibria.
The problem of our concern is formulated as follows.

\begin{problemb}
Assume that the parameters $\uvar$ are present in system
\eqref{semi-alg}. For any given non-negative integer $k$, determine
the condition on $\uvar$ for system \eqref{semi-alg} to have exactly
$k$ distinct real solutions.
\end{problemb}

This is the problem of \emph{real solution classification} for
\eqref{semi-alg}. We show how to solve the problem by first using
triangular decomposition with quasi-linearization to reduce
\eqref{semi-alg} to semi-algebraic systems in a single variable with
parameters and then determining the numbers of distinct real
solutions of such systems at sample points in regions of the
parameter space decomposed by the border polynomials of the
semi-algebraic systems in one variable.

%It should be pointed out that the above problem can not be completely solved by methods
%based on numerical computation.

\subsection{Preliminaries}

The main purpose of this section is to define the border polynomial
of a semi-algebraic system in one variable. We first introduce some
notations.

Let $$F=\sum_{i=0}^ma_i\,x^i,\quad G=\sum_{j=0}^lb_j\,x^j$$ be two
univariate polynomials in $x$ with coefficients $a_i,b_j$ in the
field $\cnum$ of complex numbers, and $a_m,b_l\neq 0$. The
determinant
\begin{equation*}\label{eq:sylmat}
 \begin{array}{c@{\hspace{-5pt}}l}
 \left|\begin{array}{cccccc}
a_m & a_{m-1}& \cdots   & a_0   &        &       \\
           & \ddots   & \ddots&    \ddots    &\ddots&   \\
         &          & a_m   & a_{m-1}&\cdots& a_0 \\ [5pt]
b_l & b_{l-1}& \cdots   &  b_0 &    &         \\
            & \ddots   &\ddots &   \ddots     &\ddots&       \\
         &   &    b_{l}     & b_{l-1} & \cdots &  b_0
\end{array}\right|
& \begin{array}{l}\left.\rule{0mm}{8mm}\right\}l\\
\\\left.\rule{0mm}{8mm}\right\}m
\end{array}%[-5pt]
\end{array}
\end{equation*}
is called the \emph{Sylvester resultant} (or simply
\emph{resultant}) of $F$ and $G$, and denoted by $\res(F,G)$. The
following lemma reveals the relation between the common zeros and
the resultant of two polynomials.

\begin{lemma}[\cite{m93a}]\label{lem:res-com}
 Two univariate polynomials $F$ and $G$ have common zeros in $\cnum$ if and only if $\res(F,G)=0$.
\end{lemma}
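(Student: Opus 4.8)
The plan is to prove the resultant criterion by relating $\res(F,G)$ to a linear map on polynomial spaces and then to the product of differences of roots. First I would set up the classical Sylvester map $\Phi\colon \cnum_{<l}[x]\times\cnum_{<m}[x]\to\cnum_{<m+l}[x]$ defined by $\Phi(A,B)=AF+BG$, where $\cnum_{<k}[x]$ denotes polynomials of degree less than $k$. Writing $\Phi$ in the monomial bases $\{x^{l-1},\dots,1\}$ for the first factor, $\{x^{m-1},\dots,1\}$ for the second, and $\{x^{m+l-1},\dots,1\}$ for the target, one checks directly that the matrix of $\Phi$ is exactly the transpose of the Sylvester matrix displayed above, so $\det\Phi = \pm\res(F,G)$. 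Hence $\res(F,G)=0$ if and only if $\Phi$ is not injective, i.e.\ there exist $A,B$, not both zero, with $\deg A<l$, $\deg B<m$, and $AF+BG=0$.

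Next I would translate the kernel condition into a statement about common factors. If $AF=-BG$ with $A,B$ as above and not both zero, then since $\deg(AF)<m+l$ while $F$ has degree $m$, we get $\deg A<l$ and in fact $A\neq 0$ forces $B\neq 0$ and vice versa (if $A=0$ then $BG=0$ so $B=0$, contradiction). Now compare with the factorization into irreducibles in $\cnum[x]$: every irreducible factor of $F$ divides $BG$; if $F$ and $G$ were coprime, then $F\mid B$, impossible since $\deg B<m=\deg F$ and $B\neq 0$. Conversely, if $F$ and $G$ share a common root $\alpha$, write $F=(x-\alpha)F_1$, $G=(x-\alpha)G_1$; then $A:=G_1$, $B:=-F_1$ satisfy $\deg A = l-1<l$, $\deg B=m-1<m$, and $AF+BG=(x-\alpha)(G_1F_1-F_1G_1)=0$, exhibiting a nonzero kernel element. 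Since $\cnum$ is algebraically closed, "common irreducible factor in $\cnum[x]$'' and "common zero in $\cnum$'' are the same thing, which is precisely where the hypothesis that the coefficients lie in $\cnum$ is used.

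Combining the two equivalences gives: $\res(F,G)=0 \iff \Phi$ has nontrivial kernel $\iff F,G$ have a common root in $\cnum$, which is the claim. An alternative route, if one prefers, is to invoke the Poisson-type product formula $\res(F,G)=a_m^l\prod_{i=1}^m G(\alpha_i)$ where $\alpha_1,\dots,\alpha_m$ are the roots of $F$ with multiplicity; then $\res(F,G)=0$ iff some $G(\alpha_i)=0$ iff $F$ and $G$ share a root. I expect the main obstacle to be purely expository rather than mathematical: one must be careful about the degree bookkeeping in the kernel argument (ensuring $A$ and $B$ are simultaneously zero or simultaneously nonzero, and that the degree bounds are strict), and one must state clearly that the equivalence between "$\res(F,G)=0$'' and "common factor'' holds over any field, while the passage from "common factor'' to "common zero'' requires algebraic closure — which is exactly the setting $\cnum$ provides. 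Since this lemma is cited from \cite{m93a}, I would in fact keep the proof to a brief sketch along the lines above, or simply refer to the cited source.
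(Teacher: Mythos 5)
Your proof is correct. The paper itself offers no argument for this lemma --- it is stated as a classical fact imported from the cited reference --- so there is nothing to compare against; your Sylvester-map kernel argument (resultant vanishes iff $AF+BG=0$ has a nontrivial solution with $\deg A<l$, $\deg B<m$, iff $F$ and $G$ share an irreducible factor, iff they share a root since $\cnum$ is algebraically closed) is exactly the standard textbook proof, and the Poisson formula $\res(F,G)=a_m^l\prod_i G(\alpha_i)$ is an equally valid shortcut. The only point worth flagging is the degenerate case $m=0$ or $l=0$: there the statement still holds vacuously (a nonzero constant has no zeros and the resultant is a nonzero power of it), but your kernel argument implicitly assumes $m,l\geq 1$, e.g.\ when you factor out $(x-\alpha)$; a one-line remark disposing of that case would make the write-up airtight. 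Given that the lemma is cited rather than proved in the paper, keeping your argument as a brief sketch or a bare reference is entirely appropriate.
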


Let ${\rm d}F/{\rm d}x$ denote the derivative of $F$ w.r.t.\ $x$.
The resultant of $F$ and ${\rm d}F/{\rm d}x$,  $\res(F,{\rm d}F/{\rm
d}x)$, is called the \emph{discriminant} of $F$ and denoted by
$\discr(F)$. The following proposition may be easily proved by
definition.

\begin{proposition}[\cite{m93a}]
A univariate polynomial $F$ has multiple zeros in $\cnum$ if and
only if $\discr(F)=0$.
\end{proposition}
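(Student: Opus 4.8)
The plan is to reduce the statement to Lemma~\ref{lem:res-com} via the defining identity $\discr(F)=\res(F,\mathrm{d}F/\mathrm{d}x)$. Since we work over $\cnum$, which has characteristic $0$, for any $F$ with $\deg F=m\geq 1$ the derivative $\mathrm{d}F/\mathrm{d}x$ has degree exactly $m-1$ and hence a nonvanishing leading coefficient, so Lemma~\ref{lem:res-com} applies to the pair $(F,\mathrm{d}F/\mathrm{d}x)$ and yields: $\discr(F)=0$ if and only if $F$ and $\mathrm{d}F/\mathrm{d}x$ have a common zero in $\cnum$. (The degenerate cases $\deg F=0$ and $\deg F=1$, in which $F$ trivially has no multiple zero, I would dispose of directly: when $\deg F=1$ one computes $\discr(F)=a_1\neq 0$.) It thus remains to prove the purely elementary equivalence: $F$ has a multiple zero in $\cnum$ if and only if $F$ and $\mathrm{d}F/\mathrm{d}x$ share a zero in $\cnum$.

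For the forward direction, suppose $\alpha\in\cnum$ is a multiple zero of $F$. Then $F=(x-\alpha)^2G$ for some $G\in\cnum[x]$, and the product rule gives $\mathrm{d}F/\mathrm{d}x=2(x-\alpha)G+(x-\alpha)^2\,\mathrm{d}G/\mathrm{d}x$, which clearly vanishes at $x=\alpha$. Hence $\alpha$ is a common zero of $F$ and $\mathrm{d}F/\mathrm{d}x$.

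For the converse, suppose $\alpha\in\cnum$ is a common zero of $F$ and $\mathrm{d}F/\mathrm{d}x$. Since $F(\alpha)=0$ we may write $F=(x-\alpha)H$ with $H\in\cnum[x]$; differentiating gives $\mathrm{d}F/\mathrm{d}x=H+(x-\alpha)\,\mathrm{d}H/\mathrm{d}x$, so evaluating at $\alpha$ yields $0=\mathrm{d}F/\mathrm{d}x\big|_{x=\alpha}=H(\alpha)$. Therefore $(x-\alpha)\mid H$, so $(x-\alpha)^2\mid F$, i.e.\ $\alpha$ is a multiple zero of $F$. Combining this equivalence with the consequence of Lemma~\ref{lem:res-com} stated above completes the argument.

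I do not anticipate a genuine obstacle here; the proof is short and routine. The only point requiring a little care is checking that the hypotheses of Lemma~\ref{lem:res-com} are met — specifically that $\mathrm{d}F/\mathrm{d}x$ has a nonzero leading coefficient — which is where the characteristic‑$0$ assumption enters and why the low‑degree cases must be noted separately.
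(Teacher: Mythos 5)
Your proof is correct and follows exactly the route the paper intends: the paper offers no explicit proof (it merely cites \cite{m93a} and remarks that the statement ``may be easily proved by definition''), and the natural argument is precisely yours --- apply Lemma~\ref{lem:res-com} to the pair $(F,\mathrm{d}F/\mathrm{d}x)$ and then check the elementary equivalence between multiple zeros of $F$ and common zeros of $F$ and its derivative. Your attention to the characteristic-zero hypothesis and the low-degree cases is a welcome bit of extra care, but nothing in your argument departs from the standard approach.
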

%\begin{proof}
%  todo
%\end{proof}

Denote by $\nz(*)$ the number of distinct real zeros or solutions of
$*$, where $*$ may be a polynomial, a polynomial set, or a
semi-algebraic system. Consider the following semi-algebraic system
in $x$ with parameters $\uvar$:
%$\{P(\uvar,x)=0, Q(\uvar,x)>0\}$.
\begin{equation*}\label{ex:uni-pq}
\sas{S}=\left\{
\begin{array}{l}\smallskip
P(\uvar,x)=0,\\
Q_1(\uvar,x)>0,\ldots,Q_s(\uvar,x)>0,
\end{array}
\right.
\end{equation*}
where $P(\uvar,x)=\sum_{i=0}^m a_i(\uvar)\,x^i$. It is easy to see
that $\nz(P)$ may change when the leading coefficient $a_m(\uvar)$
or the discriminant $\discr(P)$ goes from non-zero to zero and vice
versa. Moreover, if $\res(P,Q_i)$ goes across zero, then the zeros
of $P$ will pass through the boundaries of the intervals determined
by $Q_i>0$, which means that $\nz(\sas{S})$ may change. This
motivates the following definition.

\begin{definition}[Border Polynomial]
The product
$$a_m(\uvar)\cdot\discr(P)\cdot\prod_{i=1}^s\res(P,Q_i)$$ is called
the \emph{border polynomial} of $\sas{S}$ and denoted by
$\bp(\sas{S})$.
\end{definition}

Based on the above discussions, the proof of the following theorem
is obvious.

\begin{theorem}\label{thm:main}
  Let $A, B$ be two points in the space of parameters $\uvar$, which do not annihilate $\bp(\sas{S})$.
  If there exists a real path $C$ from
  $A$ to $B$ such that $C\cap\zero(\bp(\sas{S}))=\emptyset$, then $\nz(\sas{S}|_{A})=\nz(\sas{S}|_{B})$.
\end{theorem}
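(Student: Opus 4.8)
The plan is to show that $\nz(\sas{S})$ is locally constant on the complement of $\zero(\bp(\sas{S}))$ in the real parameter space, and then deduce the claim via a connectedness/path argument. First I would fix a point $A$ not annihilating $\bp(\sas{S})$. Since $a_m(A)\neq 0$, the polynomial $P(A,x)$ has degree exactly $m$; since $\discr(P)(A)\neq 0$, $P(A,x)$ has $m$ distinct (complex) simple zeros; and since $\res(P,Q_i)(A)\neq 0$ for each $i$, no real zero of $P(A,x)$ is a zero of any $Q_i(A,x)$, so every real root of $P(A,x)$ lies in the open set where all $Q_i(A,x)>0$ or where some $Q_i(A,x)<0$, strictly. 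I would then invoke continuity of roots: on a neighbourhood $U$ of $A$ in which $a_m$ does not vanish and $\discr(P)$ does not vanish, the $m$ complex roots of $P(\uvar,x)$ vary continuously with $\uvar$ and never collide, so in particular the number of real roots among them is constant on $U$ (a real root can only become non-real by first colliding with its conjugate, which would force $\discr(P)=0$). Shrinking $U$ further so that no $\res(P,Q_i)$ vanishes on $U$, each real root $x_j(\uvar)$ of $P$ keeps every $Q_i(\uvar,x_j(\uvar))$ away from $0$, hence of constant sign on $U$ by continuity. Consequently the set of real roots of $P(\uvar,x)$ satisfying all the strict inequalities $Q_i>0$ has constant cardinality on $U$; that cardinality is exactly $\nz(\sas{S}|_{\uvar})$. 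This establishes that $\uvar\mapsto\nz(\sas{S}|_{\uvar})$ is locally constant on $\rnum^d\setminus\zero(\bp(\sas{S}))$.

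Having the local-constancy statement, the global step is immediate: the function $\nz(\sas{S}|_{\cdot})$ is continuous (being locally constant) from $\rnum^d\setminus\zero(\bp(\sas{S}))$ into the discrete set $\nnegz$, so it is constant on every connected subset. Given a real path $C$ from $A$ to $B$ with $C\cap\zero(\bp(\sas{S}))=\emptyset$, the image of $C$ is connected and avoids $\zero(\bp(\sas{S}))$, hence $\nz(\sas{S}|_{\cdot})$ is constant along $C$; in particular $\nz(\sas{S}|_A)=\nz(\sas{S}|_B)$, which is what we want.

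The main obstacle is the rigorous justification of "continuity of roots" in the form I need, namely that the \emph{number} of real roots — and the signs of the $Q_i$ evaluated at those roots — are locally constant once we have pinned down $a_m\neq 0$ and $\discr(P)\neq 0$. The cleanest way I would argue this is to note that, on the neighbourhood $U$ where $a_m$ and $\discr(P)$ are non-zero, $P(\uvar,x)$ defines a degree-$m$ squarefree polynomial whose roots are the fibres of an unramified (étale) $m$-sheeted covering over $U$; over the connected component of $U$ containing $A$ one can pick continuous local branches $x_1(\uvar),\ldots,x_m(\uvar)$, and complex conjugation permutes them, so the partition into real branches and conjugate-pair branches is locally constant. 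An alternative, more elementary route avoiding covering-space language is to use the fact that for a squarefree $P$ with $a_m\neq 0$, the real roots can be counted by a Sturm sequence whose entries are polynomial (in fact rational) in $\uvar$ with denominators that are powers of $a_m$ and $\discr(P)$, so the sign pattern of the Sturm sequence at $\pm\infty$ is locally constant; likewise one isolates each real root in a rational interval whose endpoints can be chosen locally constant, and evaluates $\sgn Q_i$ there. Either way the argument is routine once the non-vanishing of $\bp(\sas{S})$ is in force, and the definition of $\bp(\sas{S})$ was engineered precisely so that all three potential sources of change — degree drop, root collision, and a root crossing a boundary $Q_i=0$ — are excluded. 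I would then remark that this is exactly the reasoning anticipated in the paragraph preceding the Definition of the border polynomial, so only these continuity details need to be spelled out.
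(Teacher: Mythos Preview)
Your proposal is correct and follows precisely the approach the paper intends: the paper gives no formal proof at all, stating only that ``Based on the above discussions, the proof of the following theorem is obvious,'' where the preceding discussion is exactly the heuristic you formalize (degree drop from $a_m=0$, root collision from $\discr(P)=0$, root crossing a boundary from $\res(P,Q_i)=0$). Your write-up simply supplies the continuity-of-roots and local-constancy details that the paper leaves implicit, and you yourself note this in your final sentence.
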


\subsection{Illustrative Example}

Now we use an example to explain our general approach for solving
Problem B. Consider the semi-algebraic system
\begin{equation}\label{ex:with-par}
\left\{
\begin{array}{l}\smallskip
x^3-uy^2=0,\\ \smallskip y^2-2\,x-1=0, \\ \smallskip
x-y\neq 0,\\
y+s >0,
\end{array}
\right.
\end{equation}
where $s,u\in \rnum$ are parameters. The following four steps permit
us to decompose the parameter space into regions such that on each
of them the number of distinct real solutions of \eqref{ex:with-par}
is invariant and computable.

\ste{B1.} Decomposing $\pss=[x^3-uy^2,y^2-2\,x-1]$ under the
variable ordering $u<x<y$, we obtain three regular systems $[\tss_1,
\{S_1\}], [\tss_2, \emptyset], [\tss_3, \emptyset]$ with
\begin{equation*}
\begin{split}
  &\tss_1=[-x^3+2\,ux+u, -y^2+2\,x+1],\quad S_1=u(32\,u-27),\\
  &\tss_2=[u, x, y^2-1], \\
  &\tss_3=[32\,u-27, 8\,x^2-6\,x-9, -y^2+2\,x+1].
\end{split}
\end{equation*}
It follows that for any given values $\val{s},\val{u}$ of the
parameters $s,u$:
\begin{itemize}
  \item if $S_1|_{(\val{s},\val{u})}\neq 0$, then
  $\zero(\pss|_{(\val{s},\val{u})})=\zero(\tss_1|_{(\val{s},\val{u})})$;
  \item if $\val{u}=0$, then $\zero(\pss|_{(\val{s},\val{u})})=\zero([x, y^2-1])$;
  \item if $32\,\val{u}-27=0$, then $\zero(\pss|_{(\val{s},\val{u})})=\zero([8\,x^2-6\,x-9, -y^2+2\,x+1])$,
\end{itemize}
where $S_1|_{(\val{s},\val{u})}$, $\pss|_{(\val{s},\val{u})}$, and
$\tss_1|_{(\val{s},\val{u})}$ denote the results of $S_1$, $\pss$,
and $\tss_1$ after substitution of $(s,u)$ by $(\val{s},\val{u})$
respectively.

We consider only the main partition $\{(s,u)|~S_1\neq 0\}$ of the
parameter space, which is of the same dimension as $\rnum^2$.
Triangular systems corresponding to main partitions are called
\emph{main branches} of the triangular decomposition.

Under the same variable ordering $u<x<y$, $\tss_1|_{x=x+y}$ may be
decomposed into four {simple} systems. The only main branch
is $[[T_1, T_2], \{S_2\}]$ with
\begin{equation*}
  \begin{split}
    &T_1=x^6-(4\,u+3)x^4-18\,ux^3+(4\,u^2-26\,u+3)x^2\\
        &~~~~~~~+(4\,u^2-14\,u)x+u^2-2\,u-1,\\
    &T_2=Iy+J,\\
    &I=-3\,x^2-8\,x+2\,u-5,\\
    &J=-x^3-6\,x^2+(2\,u-7)x+u-2,\\
    &S_2=u(32\,u^2-67\,u+64).
  \end{split}
\end{equation*}

\ste{B2.} Solving $T_2 = 0$ for $y$ yields $y = -J/I$. Substituting
this solution into the inequality constraint $y+s>0$ in
\eqref{ex:with-par}, we obtain $-J/I+s>0$, which is equivalent to
$P=(-J+Is)I>0$. On the other hand, the linear transformation
$x\rightarrow x+y$ above transforms the constraint $x-y\neq 0$  into
$x\neq 0$. Thus, when $S_1\neq 0$ and $S_2\neq 0$, the number of
real solutions of system \eqref{ex:with-par} is the same as that of
\begin{equation*}
\left\{
\begin{array}{l}\smallskip
T_1=0,\\ \smallskip
x\neq 0, \\
P>0.
\end{array}
\right.
\end{equation*}

By Lemma \ref{lem:res-com}, $T_1$ and $x$ have no common zero if
$\res(T_1,x)=u^2-2\,u-1\neq 0$. Hence, in the case when
$S_1S_2(u^2-2\,u-1)\neq 0$, the problem is reduced to that of
 real solution classification for the following semi-algebraic system in one variable $x$:
\begin{equation*}\label{ex:uni-par}
\sas{U}=\left\{
\begin{array}{l}\smallskip
T_1=0,\\
P >0.
\end{array}
\right.
\end{equation*}

\ste{B3.}
The border polynomial of $\sas{U}$ is
$$\bp(\sas{U})=64\,u^{10}(32\,u-27)^2(32\,u^2-67\,u+64)^6(s^6-3\,s^4-8\,us^2+3\,s^2-1),$$
whose zero set (consisting of algebraic curves) divide the parameter
space $\rnum^2$ into $9$ separated regions (see Figure
\ref{fg:par-divide}). By Theorem \ref{thm:main}, for all the points
in each region, $\nz(\sas{U})$ is invariant. We choose $9$ sample
points
\begin{equation*}
  \begin{split}
    &A_1=(-1, -1),~ A_2=(0, -1),~ A_3=(1, -1), ~A_4=(-2, 1/2), \\
    &A_5=(0, 1/2),~ A_6=(2, 1/2),~ A_7=(-3, 1),~ A_8=(0, 1), ~A_9=(3, 1)
  \end{split}
\end{equation*}
as shown in Figure \ref{fg:par-divide}. Let $A_i$ also denote the
corresponding region of the parameter space.
\begin{figure}[h]\centering
  % Requires \usepackage{graphicx}
  \includegraphics[width=7cm]{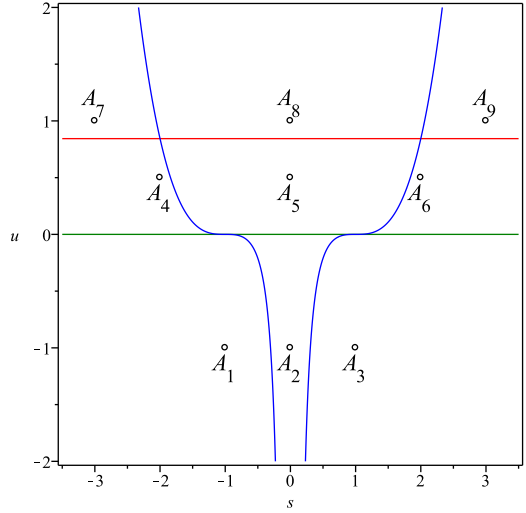}\\
  \caption{Partitions of the parameter space and sample points}\label{fg:par-divide}
\end{figure}

\ste{B4.} The number of real solutions of $\sas{U}$ for each region
$A_i$ can be determined by counting $\nz(\sas{U}|_{A_i})$. For
example, the result of $\sas{U}$ specialized at the sample point
$A_1$ is
\begin{equation*}
\sas{U}|_{A_1}=\left\{
\begin{array}{l}\smallskip
x^6+x^4+18\,x^3+33\,x^2+18\,x+2=0,\\
(x^3+9\,x^2+17\,x+10)(-3\,x^2-8\,x-7) >0.
\end{array}
\right.
\end{equation*}

Using the approach presented in Section \ref{sec:par-alg}, one can
verify that the above parameter-free system has no real solution. So
$\sas{U}$ has no real solution in region $A_1$ (without its border).
In other regions $A_2,\ldots,A_9$, the numbers of distinct real
solutions of $\sas{U}$ can be similarly computed; they are  $1, 2,
0, 1, 2, 0, 1, 2$ respectively.

Thus, provided that $S_1S_2(u^2-2\,u-1)\neq 0$ and $\bp(\sas{U})\neq
0$, or simply
$$N=u(32\,u-27)(u^2-2\,u-1)R\neq 0,$$
where $R=s^6-3\,s^4-8\,us^2+3\,s^2-1$, the number of distinct real
solutions of system \eqref{ex:with-par} is
\begin{itemize}
  \item $0$ if and only if $R<0$ and $s<0$~(i.e., $(s,u)\in A_1\cup A_4\cup A_7$);
  \item $1$ if and only if $R>0$~(i.e., $(s,u)\in A_2\cup A_5\cup A_8$);
  \item $2$ if and only if $R<0$ and $s>0$~(i.e., $(s,u)\in A_3\cup A_6\cup A_9$).
\end{itemize}

In the above result, the additional polynomial $s$ is a must because
the left ($A_1,A_4,A_7$) and right ($A_3,A_6,A_9$) regions cannot be
distinguished by using the sign of $R$ only. For this simple
example, $s$ can be easily observed from Figure \ref{fg:par-divide}.
However, in general it is challenging to find polynomials to
distinguish different regions described by the same inequality. It
is pointed out by Yang and others \cite{y01c} that such polynomials
are contained in the so-called generalized discriminant list and can
be picked out by repeated trials.

For the case when $N=0$, one may add the equation to
\eqref{ex:with-par} and apply the above approach similarly. The
difference is that only one of $u, s$ is now viewed as parameter.
Repeating the process, one can finally obtain the real solution
classification of system \eqref{ex:with-par} for all the points in
the parameter space.

\subsection{General Method}

The problem of analyzing equilibria of an economy with parameters
can be reduced to that of real solution classification of the
following semi-algebraic system
\begin{equation}\label{eq:par-sys}
\left\{
\begin{array}{l}\smallskip
F_1(\uvar,\xvar)=0,\ldots,F_n(\uvar,\xvar)=0,\\ \smallskip
N_1(\uvar,\xvar)\neq 0,\ldots,N_m(\uvar,\xvar)\neq 0,\\ \smallskip
P_1(\uvar,\xvar)> 0,\ldots,P_s(\uvar,\xvar)> 0,\\
P_{s+1}(\uvar,\xvar)\geq 0,\ldots,P_{s+t}(\uvar,\xvar)\geq 0.
\end{array}
\right.
\end{equation}
Let $\pset{F}=[F_1,\ldots,F_n]$, $\pset{N}=[N_1,\ldots,N_m]$, and
the parameters and variables be ordered as
$u_1<\cdots<u_d<x_1<\cdots<x_n$. Our general method for solving the
problem of real solution classification of \eqref{eq:par-sys}
consists of the following main steps.

  \step{B1.}
   Decompose the polynomial system $[\pset{F},\pset{N}]$ into finitely many regular systems $[\tss_1,\sss_1],\ldots,[\tss_k,\sss_k]$ such that
  $$\zero(\pset{F}/\pset{N})=\bigcup_{i=1}^k\zero(\tss_i/\sss_i),$$
  where the zero sets of different main branches do not intersect with each other.
  Without loss of generality, suppose that the first $r$ regular systems $[\tss_1,\sss_1],\ldots,[\tss_r,\sss_r]$ are the main branches.
  We may also suppose that $\tss_1,\ldots,\tss_r$ are all quasi-linear, for otherwise they may be made quasi-linear
  by quasi-linearization using appropriate linear transformations.

  \step{B2.}
  For each $i=1,\ldots,r$, let
  $$\tss_i=[T_{i1}(\uvar,x_1),\dots,T_{in}(\uvar,x_1,\ldots,x_n)]$$
  and $S_i$ be the product of all the polynomials (in $\uvar$) in $\sss_i$. Solve $T_{ij}=0$ for $x_j,~j=n,\ldots,2$,
  and substitute the solutions successively into $P_{l}$,
  $l=1,\ldots,s+t$, to obtain
  rational functions $A_{il}/A'_{il}$ respectively,
  where $A_{il},A'_{il}$ are all polynomials in $x_1$ with parameters $\uvar$.
  Then under the assumption that the parameters $\uvar$ satisfy $S_i\neq 0$ and $\res(A_{il},T_{i1})\neq 0$,
  the problem is further reduced to that of real solution classification of the semi-algebraic system
  \begin{equation*}
    \sas{U}_i=\left\{
    \begin{array}{l}\smallskip
    T_{i1}=0,\\
    A_{il}\cdot A'_{il}>0,\quad l=1,\ldots,s+t.
    \end{array}
    \right.
  \end{equation*}

  \step{B3.} For each $i=1,\ldots,r$, construct the border polynomial $\bp(\sas{U}_i)$, whose
   real zero set decomposes the parameter space into separated regions.
   By Theorem \ref{thm:main}, $\nz(\sas{U}_i)$ is invariant in any region.
  Choose a sample point from each region (which can be done automatically by using, e.g.,
   the method of partial cylindrical algebraic decomposition (PCAD) \cite{c91p}).

 \step{B4.} For each region, determine $\nz(\sas{U}_i)$ by counting the number of distinct real solutions of
 $\sas{U}_i$ at the sample point. Finally, combining the computed results,
 we obtain the necessary and sufficient conditions on $\uvar$ for system \eqref{eq:par-sys}
 to have any given number of distinct real solution, provided that
 $$S_i\cdot\bp(\sas{U}_i)\prod_{l=1}^{s+t}\res(A_{il},T_{i1})\neq 0,\quad i=1,\ldots,r.$$

 \step{B5.} Determine the numbers of distinct real solutions of the
 semi-algebraic systems corresponding to the regular systems $[\tss_{r+1},\sss_{r+1}],\ldots,[\tss_k,\sss_k]$
similarly by regarding some of the parameters as variables. Treat
each of the cases in which $S_i=0$, or $\bp(\sas{U}_i)=0$, or
$\res(A_{il},T_{i1})=0$ for $l=1,\ldots,s+t$ and $i=1,\ldots,r$ by
adding the equality constraint to the original semi-algebraic system
and by taking one of the parameters as variable.

\medskip
The correctness of the above method is guaranteed by Theorem
\ref{thm:main}, and the termination is obvious. Yang and others
\cite{y01c} proposed a more direct method for real solution
classification of semi-algebraic systems {with parameters}.
Their method avoids the process of quasi-linearization of triangular
sets, which is costly when the degrees of the involved polynomials
are high.

\section{Experimental Results}\label{sec:exp}

\subsection{Arms Race Game with Cheap Talk}

The arms race game is originally proposed by  Baliga and
Sj\"ostr\"om \cite{b04a}. In this game, two players simultaneously
and independently choose between building  new weapons
 ($B$) and not building new weapons ($N$). The payoffs of the $i$th
 player are described as follows:
\begin{center}
\begin{tabular}{c|c c}
  % after \\: \hline or \cline{col1-col2} \cline{col3-col4} ...
   & $B$ & $N$ \\
  \hline
  $B$ & $-c_i$ & $m-c_i$ \\
  $N$ & $-d$ & 0 \\
\end{tabular}
\end{center}
In this table, $c_i>0$ is the cost of acquiring new weapons, $m>0$
represents the gain of a player who chooses $B$ while his or her
enemy chooses $N$, and $d>0$ is  the loss of a player when he or she
chooses $N$ and his or her opponent chooses $B$.

The cost $c_i$ is the private information of player $i$, called the
\emph{type} of player $i$. Let each $c_i$ be independent and
identically distributed (i.i.d.) with a continuous cumulative
distribution function $F$, which has compact support $[0,\val{c}]$
with $\val{c}<d$. In addition, $F$ satisfies $F(0) = 0$, $F(\val{c})
= 1$, and $F'(c) >0$ for $0<c<\val{c}$.

To avoid the outcome of arms race, Baliga and Sj\"ostr\"om
introduced the mechanism of cheap talk to the game, which consists
of three stages. In stage zero, nature chooses the types $c_1$ and
$c_2$. In stage one, each player simultaneously announces a message,
conciliatory or aggressive. What the players plan to do in the
future may be read from their messages. In the final stage, the two
players make their decisions ($B$ or $N$) simultaneously according
to the received messages.

The following lemma provides a basis for the analysis of equilibria
(see the original paper \cite{b04a} by Baliga and Sj\"ostr\"om for
its proof).

\begin{lemma}\label{lem:armrace}
 Suppose that $F(c)\,d\geq c$ for all $c\in[0,\val{c}]$.
 For any sufficiently small $m > 0$, there exists a triple $(c_L,c_*,c_H)$ such that
 \begin{equation}\label{eq:armrace}
   \left\{
   \begin{split}
     & [F(c_H)-F(c_L)]c_L=[1-F(c_H)]m,\\
     & [1-2\,F(c_H)+2\,F(c_L)]c_H=F(c_L)\,d,\\
     & [1-F(c_H)](m-c_*)-F(c_L)c_*=-F(c_L)\,d,\\
     & 1>c_H>c_*>c_L>m>0,\\
     & d>0.
   \end{split}
   \right.
 \end{equation}
Moreover, if $m \to 0$, then $c_H\to 0$.
\end{lemma}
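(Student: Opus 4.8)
The assertion has the character of a continuation result: we want to follow the branch of solutions of \eqref{eq:armrace} that bifurcates from the degenerate point $c_L=c_*=c_H=0$ as the small parameter $m$ is turned on, and to show it survives, with $c_H$ small, for all small $m>0$. I would therefore treat $c_H$ as the free variable, use the equations of \eqref{eq:armrace} to express the remaining unknowns in terms of it, recover $m$ itself as a continuous function of $c_H$ that vanishes at $c_H=0$, and then invoke the intermediate value theorem. Since $F$ is only assumed continuous and strictly increasing on $(0,\val{c})$, I would deliberately avoid the implicit function theorem and work with monotonicity plus the intermediate value theorem throughout; I will also use the hypothesis $F(c)\,d\ge c$ in its strict form $F(c)\,d>c$ on $(0,\val{c})$ (the construction breaks precisely where $F$ is linear near $0$, a case that must be set aside).

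\emph{Steps 1 and 2 (reduction to a scalar equation).} The third equation in \eqref{eq:armrace} is affine in $c_*$ and solves to
\[
c_*=\frac{[1-F(c_H)]\,m+F(c_L)\,d}{1-F(c_H)+F(c_L)},
\]
which is well defined because the denominator exceeds $1-F(c_H)>0$ for $c_H<\val{c}$. I would then check that, once the first two equations of \eqref{eq:armrace} hold with $0<c_L<c_H$ and $c_H$ small enough that $1-2F(c_H)+2F(c_L)>0$, the chain $c_L<c_*<c_H$ is automatic: substituting $[1-F(c_H)]\,m=c_L[F(c_H)-F(c_L)]$ (first equation) and $F(c_L)\,d=c_H[1-2F(c_H)+2F(c_L)]$ (second equation) into the displayed expression, both inequalities collapse, after clearing denominators, to $c_L<c_H$ together with the strict monotonicity of $F$. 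Next, fix a small $\varepsilon>0$, put $c_H=\varepsilon$, and consider
\[
g(t)=[1-2F(\varepsilon)+2F(t)]\,\varepsilon-F(t)\,d=\varepsilon\bigl(1-2F(\varepsilon)\bigr)+F(t)\bigl(2\varepsilon-d\bigr),\qquad t\in[0,\varepsilon].
\]
For $2\varepsilon<d$ the factor $2\varepsilon-d$ is negative and $F$ is strictly increasing, so $g$ is strictly decreasing; since $g(0)=\varepsilon\bigl(1-2F(\varepsilon)\bigr)>0$ and $g(\varepsilon)=\varepsilon-F(\varepsilon)\,d<0$ (strict form of the hypothesis), $g$ has a unique zero $c_L=\lambda(\varepsilon)\in(0,\varepsilon)$, and $\lambda$ is continuous in $\varepsilon$. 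Solving $g(\lambda)=0$ gives $F(\lambda(\varepsilon))=\varepsilon\bigl(1-2F(\varepsilon)\bigr)/(d-2\varepsilon)\to 0$, hence $\lambda(\varepsilon)\to 0$, as $\varepsilon\to 0^+$.

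\emph{Step 3 (recovering $m$ and concluding).} With $c_H=\varepsilon$ and $c_L=\lambda(\varepsilon)$ the first equation forces
\[
m=M(\varepsilon):=\frac{\lambda(\varepsilon)\,[\,F(\varepsilon)-F(\lambda(\varepsilon))\,]}{1-F(\varepsilon)},
\]
which is continuous and strictly positive on some interval $(0,\varepsilon_0]$ and tends to $0$ as $\varepsilon\to 0^+$. For $\varepsilon\le\varepsilon_0$ the remaining constraints of \eqref{eq:armrace} check out directly: $c_H=\varepsilon<1$; $c_L=\lambda(\varepsilon)>M(\varepsilon)=m$ because this amounts to $1+F(\lambda(\varepsilon))>2F(\varepsilon)$, true when $F(\varepsilon)<\tfrac12$; $d>0$ is given; and $c_L<c_*<c_H$ by Step 1. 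Hence for every $m$ with $0<m\le M(\varepsilon_0)$ the intermediate value theorem supplies $\varepsilon$ with $M(\varepsilon)=m$, and $(c_L,c_*,c_H)=(\lambda(\varepsilon),c_*,\varepsilon)$ solves \eqref{eq:armrace}; this is the first assertion, ``sufficiently small'' meaning $m\le M(\varepsilon_0)$. For the limiting claim, set $\varepsilon(m)=\min\{\varepsilon\in(0,\varepsilon_0]:M(\varepsilon)=m\}$, which exists because $M^{-1}(\{m\})$ is closed in $(0,\varepsilon_0]$ and, as $M\to 0$ at $0$, bounded away from $0$. If $\varepsilon(m)$ failed to tend to $0$ as $m\to 0^+$, a subsequence would converge to some $\varepsilon^*>0$ with $M(\varepsilon^*)=0$, contradicting $M>0$; thus $c_H=\varepsilon(m)\to 0$.

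\emph{Main obstacle.} The one substantive difficulty is confining the constructed branch to the open chain $1>c_H>c_*>c_L>m>0$: that is where the cancellations of Step 1 and the inequality $c_L>m$ of Step 3 must be verified, and where the hypothesis is genuinely needed in strict form (equivalently, where $F$ being linear on a neighbourhood of $0$ must be excluded). Once these order relations are in hand, the monotonicity of $g$, the two appeals to the intermediate value theorem, and the passage $c_H\to 0$ are all routine.
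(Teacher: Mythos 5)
The paper offers no proof of this lemma at all: it is a restatement of Baliga and Sj\"ostr\"om's result, and the text explicitly defers the argument to the original reference \cite{b04a}. So there is no in-paper proof to compare yours against; what you have written is a genuinely self-contained argument, and it is essentially sound. The reduction checks out: solving the third equation for $c_*$ and then substituting the first two equations does collapse $c_L<c_*$ and $c_*<c_H$ to $(c_H-c_L)\,[1-2F(c_H)+2F(c_L)]>0$ and $(c_H-c_L)\,[F(c_H)-F(c_L)]>0$ respectively; your $g(t)=0$ is exactly the second equation with $c_H=\varepsilon$ frozen, its monotone solvability and the formula $F(\lambda(\varepsilon))=\varepsilon\bigl(1-2F(\varepsilon)\bigr)/(d-2\varepsilon)$ are correct, and the two appeals to the intermediate value theorem (to $g$, then to $M$) together with the min-selection $\varepsilon(m)$ for the limit clause are all valid.

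The one substantive point is the strictness caveat you yourself flag. The lemma as printed assumes only $F(c)\,d\geq c$, while your construction needs $F(\varepsilon)\,d>\varepsilon$ to obtain $\lambda(\varepsilon)<\varepsilon$ and hence $m=M(\varepsilon)>0$. This is not a defect of your proof but of the weak hypothesis: if $F(c)=c/d$ on some interval $(0,\delta]$, the second equation gives $(c_H-c_L)(1-2c_H/d)=0$, forcing $c_H=c_L$ whenever $c_H<d/2$, and then the first equation forces $m=0$; so for small $m>0$ no admissible triple with small $c_H$ exists and the clause ``$c_H\to 0$'' fails. Your argument therefore proves the lemma under the non-degeneracy reading $F(c)\,d>c$ for $c\in(0,\val{c})$, which is how the hypothesis must be read for the statement to be true, and that is the most any proof can deliver. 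It would be worth stating this strengthened hypothesis explicitly at the outset rather than only in passing.
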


Informally speaking, the value of $c_*$ represents the cut-off where
a player is indifferent between $B$ and $N$ in stage one if both
players send a conciliatory message. The value of $c_L$ and $c_H$
indicate the critical point of the type space where a player changes
its message in stage one. Therefore, if both players have a type
exceeding $c_H$, then both of them send a conciliatory message in
stage one and play $N$ in stage two.

One can observe that if $c_H$ tends to $0$, then the probability of
a player with type greater than $c_H$ tends to $1$. By Lemma
\ref{lem:armrace}, the arms race may be avoided with high
probability if the parameter $m$ is sufficiently small. This is the
main result of \cite{b04a}. We restate it as the following theorem.

\begin{theorem}\label{thm:armrace}
Suppose that $F(c)\,d\geq c$ for all $c\in[0,\val{c}]$. Then for any
$\delta> 0$, there is an $\val{m} > 0$ such that for any $m~(0 < m <
\val{m})$, the arms race game with cheap talk has a perfect Bayesian
equilibrium, where $N$ is played with  probability at least
$1-\delta$.
\end{theorem}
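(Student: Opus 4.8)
The plan is to derive the theorem from Lemma~\ref{lem:armrace} together with the continuity of $F$ at the origin, the equilibrium construction itself being already available from \cite{b04a}. Fix $\delta>0$. By the first assertion of Lemma~\ref{lem:armrace}, there is $m_0>0$ such that for every $m$ with $0<m<m_0$ the system \eqref{eq:armrace} has a solution $(c_L,c_*,c_H)$ with $1>c_H>c_*>c_L>m>0$ and $d>0$. As explained before the theorem, such a triple specifies a strategy profile of the three-stage game: the thresholds $c_L$ and $c_H$ determine when a player switches the message it announces in stage one, and $c_*$ is the cut-off type that is indifferent between $B$ and $N$ in stage two once both players have sent a conciliatory message. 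The three equations of \eqref{eq:armrace} are precisely the indifference conditions that make the marginal types $c_L$, $c_H$, $c_*$ willing to act as prescribed, while the strict ordering $1>c_H>c_*>c_L>m>0$ ensures that every non-marginal type strictly prefers its prescribed action; combined with the beliefs obtained from Bayes' rule at the message thresholds, this profile is a perfect Bayesian equilibrium. (Carrying out this verification is the content of \cite{b04a}; we invoke only its conclusion.)

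It then remains to bound from below the probability that $N$ is played in this equilibrium. As noted before the theorem, if both players draw a type exceeding $c_H$ then both send a conciliatory message and play $N$ in stage two; since $c_1$ and $c_2$ are i.i.d.\ with cumulative distribution $F$, the no-arms-race outcome therefore occurs with probability at least $[1-F(c_H)]^2$, so $N$ is played with probability at least $[1-F(c_H)]^2$. By the ``moreover'' clause of Lemma~\ref{lem:armrace}, $c_H\to 0$ as $m\to 0$, and since $F$ is continuous with $F(0)=0$ this gives $F(c_H)\to 0$ as $m\to 0$. Hence there is $m_1>0$ such that, for every $0<m<m_1$ for which the triple exists, $F(c_H)$ is small enough to force $[1-F(c_H)]^2>1-\delta$. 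Setting $\val{m}=\min\{m_0,m_1\}$, every $m$ with $0<m<\val{m}$ yields a perfect Bayesian equilibrium in which $N$ is played with probability at least $1-\delta$, which is the assertion of the theorem.

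The only genuine difficulty lies in the passage from the algebraic data of Lemma~\ref{lem:armrace} to the game-theoretic claim: one must identify each equation of \eqref{eq:armrace} with the indifference of the corresponding marginal type and verify that all other types strictly prefer their prescribed actions, so that the candidate profile is an actual perfect Bayesian equilibrium rather than merely a solution of a polynomial system. Since this is exactly what Baliga and Sj\"ostr\"om prove in \cite{b04a}, the argument above reduces the theorem to choosing $\val{m}$ small enough that the triple exists and $F(c_H)$ lies below the required threshold at the same time, which is immediate from $c_H\to 0$ and the continuity of $F$ at $0$.
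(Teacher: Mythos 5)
Your proposal is correct and follows essentially the same route as the paper, which gives no formal proof of this theorem: it simply restates the main result of \cite{b04a}, deferring the equilibrium verification there, and offers only the informal observation that $c_H\to 0$ makes the probability of a type exceeding $c_H$ tend to $1$. Your write-up is a careful expansion of exactly that sketch (Lemma~\ref{lem:armrace} for existence and $c_H\to 0$, continuity of $F$ with $F(0)=0$ to bound $[1-F(c_H)]^2$ below by $1-\delta$, and \cite{b04a} for the game-theoretic content), so there is nothing to add.
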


Baliga and Sj\"ostr\"om also pointed out that if $m$ is small
enough, then there may exist another equilibrium with cut-off
$(c_L,c_*,c_H)$ satisfying \eqref{eq:armrace}. For this equilibrium,
$c_L\to 0$ and $c_H\to c_M$ as $m\to 0$, where $c_M$ is determined
by $F(c_M)=1/2$. No statement is made in \cite{b04a} about whether
there are other equilibria.

In order to use the computational techniques presented in this paper
for the analysis of equilibria, $F(c)$ must be replaced by a
concrete polynomial function. For the convenience of comparison, we
use the same setting as Kubler and Schmedders \cite{k10t}, i.e.,
$F(c)=c$ and $\val{c}=1$. Then the conditions which the cut-off
$(c_L,c_*,c_H)$ satisfies are reduced to a semi-algebraic system
\begin{equation}\label{eq:arm-race}
\left\{\begin{split}
&P_1=(c_H-c_L)c_L-(1-c_H)m=0,\\
&P_2=(1-2\,c_H+2\,c_L)c_H-c_Ld=0,\\
&P_3=(1-c_H)(m-c_*)-c_Lc_*+c_Ld=0,\\
&1>c_H>c_*>c_L>m>0 ,\\
&d>0,
\end{split}\right.
\end{equation}
where $m,d$ are parameters.

\subsubsection{Analyzing Equilibria Using Triangular Decomposition}

Decomposing the set of the equation polynomials $P_1,P_2,P_3$ in
\eqref{eq:arm-race} into regular systems under the variable ordering
$d<m<c_L<c_*<c_H$, one may obtain $8$ branches
\begin{equation*}
\begin{split}
[\tss_1,\sss_1]=[[&T_1, T_2, T_3], \{m, m+1, d-2\, m-1\}],\\
[\tss_2,\sss_2]=[[&m, c_L, c_*, 2\,c_H^2 -c_H], \{d+1\}],\\
[\tss_3,\sss_3]=[[&d-1, m, c_*-c_L, c_H-c_L], \{c_L\}], \\
[\tss_4,\sss_4]=[[&m+1, c_L-1, d c_*^2+2\, c_*^2-3\,d c_* -c_* +2\, d^2-1, \\
    &c_*c_H+c_H-2\, c_* +d-1], \{d+2, d+1\}], \\
[\tss_5,\sss_5]=[[&d-2\, m-1, 2\,d c_L^2+c_L^2+d m c_L-2\, m c_L -c_L+m, \\
    & -m c_*-c_*-m c_L +dc_L +d m +m, -c_L c_H-m c_H+c_L^2+m],\\
    &\{d-1, d+1, 2\, d+1\}],\\
[\tss_6,\sss_6]=[[&d+2, m+1, c_L-1, 5\, c_*+7, 2 \,c_H+1], \emptyset], \\
[\tss_7,\sss_7]=[[&d+1, m^2+ m, c_L+m, c_*-m, 2\,c_H^2+2 \,m c_H-c_H+m], \emptyset], \\
[\tss_8,\sss_8]=[[&2\,d+1, 4 \,m+3, 7\,c_L-6, 14\,c_*+9, 7\,c_H+1],
\emptyset]
\end{split}
\end{equation*}
with
\[\begin{array}{l}\smallskip
T_1=(d-2\,m-1) c_L^3 + (2\, md+m) c_L^2 + (dm^2 -2\, m^2 - m)c_L +m^2,\\
\smallskip
T_2=(-m -1)c_*-m c_L +d c_L +dm +m, \\
T_3=(-c_L-m) c_H+c_L^2+m.
\end{array}\]
%
%Projecting the zeros of $\tss_i$ into the parameter space, we can easily get
%their corresponding regular system $\sss_i^j$ (or simply $\sss_i$)
%as follows, where the superscript $j$ indicates its dimension.
%\begin{equation*}
%\begin{split}
%&\sss_1^2=[[\,], \{m, m+1, -1-2 m+d\}],\\
%&\sss_2^1=[[m], \{d+1\}],\\
%&\sss_3^0=[[-1+d, m], \{\}], \\
%&\sss_4^1=[[m+1], \{2+d, d+1\}], \\
%&\sss_5^1=[[-1-2 m+d], \{-1+d, d+1, 2 d+1\}],\\
%&\sss_6^0=[[2+d, m+1], \{\}], \\
%&\sss_7^0=[[d+1, (m+1) m], \{\}], \\
%&\sss_8^0=[[2 d+1, 3+4 m], \{\}].
%\end{split}
%\end{equation*}
%Zero sets of the systems above cover the entire parameter space, and each of them is  called a \alert{region}.
%Among them, only the zero set of $\sss_2^1$ and that of $\sss_3^0$ intersect with each other.
%Whereas, if the parameters are specified as $\zero(\sss_3^0)=(d=-1,m=0)$, system \eqref{eq:arm-race} will
%have infinitely many real zeros. In other words, the branch $\tss_3$ is meaningless and could be ignored in practice.
%Hence, all $\sss_i$ but $\sss_3$ are unmixed and can be easily analyzed.
%
The main branch $[\tss_1,\sss_1]$ is of our concern: if $d,m$ are
specialized such that $m\neq 0$, $m+1\neq 0$ and $d-2\, m-1\neq 0$,
then the zero set of $\{P_1,P_2,P_3\}$ is the same as that of
$\tss_1=[T_1,T_2,T_3]$. One can see that $\tss_1$ is quasi-linear,
so the number of real zeros of $\tss_1$ equals to that of $T_1$.

Let $P=a_mx^m+\cdots+a_1x+a_0$ be a polynomial in $x$ with
$a_i\in\rnum$ and $a_m\neq 0$. By Descartes' rule \cite{s02s}, the
number of sign changes of the coefficient sequence $a_m,\ldots,a_0$
gives an upper bound for the number of real positive zeros of $P$.
The coefficient sequence of $T_1$ is
$$d-2\,m-1,~ 2\, md+m,~ m^2 d-2\, m^2 - m,~ m^2.$$
The zero set of the polynomials {in this sequence}
decomposes the parameter space into regions as shown in Figure
\ref{fg:armrace}. Since both $m$ and $d$ are required to be positive
real numbers, only the first quadrant of the parameter space need be
considered.
\begin{figure}[h]\centering
  % Requires \usepackage{graphicx}
  \includegraphics[width=7cm]{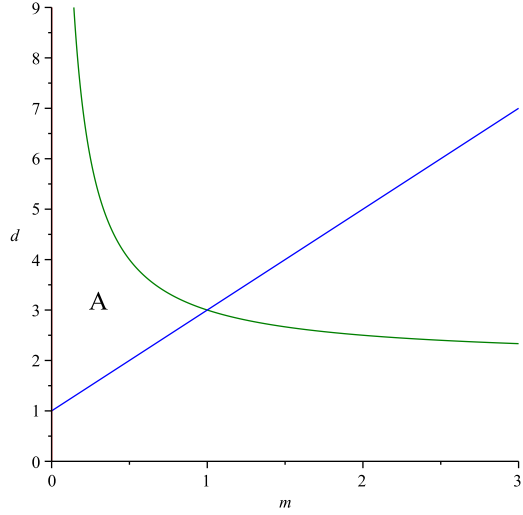}\\
  \caption{Region $A$}\label{fg:armrace}
\end{figure}

In any given region, the number of sign changes of the coefficient
sequence is constant. Baliga and Sj\"ostr\"om were interested mainly
in the case when $m$ is sufficiently small (i.e., the attraction of
building new weapons is not big). For our setting, the pre-condition
$F(c)\,d\geq c$ in Lemma \ref{lem:armrace} and Theorem
\ref{thm:armrace} is reduced to $d\geq 1$. Here we only consider
region $A$ (see Figure \ref{fg:armrace}) of the parameter space,
which can be described by
\begin{equation*}
d-2\,m-1>0,~ 2\, md+m>0,~ m^2 d-2\, m^2 - m<0,~ m^2>0.\\
\end{equation*}
The number of sign changes of $T_1$'s coefficient sequence in region
$A$ is $2$, which is an upper bound for the number of positive real
solutions of system \eqref{eq:arm-race}. Therefore, there exist at
most two equilibria under our setting if $m$ is sufficiently small
and $d\geq 1$ (or $F(c)\,d\geq c$). This demonstrates that the
result of Baliga and Sj\"ostr\"om on the number of equilibria in the
arms-race game is complete.

%This information seems to be somewhat useless. Nevertheless, for region
%\begin{equation*}
%\begin{split}
%C: d-2m-1<0, 2 md+m>0, m^2 d-2 m^2 - m>0, m^2>0.
%\end{split}
%\end{equation*}
%as an example, where the sign of coefficients changes only once,
%the non-multiplicity of equilibria is guaranteed.

%Consider the 1-dimensional region $\sss_4$, i.e.\ $m=-1$ with $(d+1)(d+2)\neq 0$,  equivalent
%equations to the equation part of \eqref{eq:arm-race} are
%\begin{equation*}
%\begin{split}
%c_L-1=0,\\
%c_*^2 d-1-c_*+2 c_*^2-3 c_* d+2 d^2=0,\\
%-1-2 c_*+c_H+c_H c_*+d=0.
%\end{split}
%\end{equation*}
%Strictly speaking, it is not quasilinear. But by changing the variable
%ordering to $c2<c1<c3$, one can make it quasilinear.
%The remaining analyses are similar as above.

\subsubsection{Analyzing Equilibria Using Real Solution Classification}

%As $\tss_1=[T_1,T_2,T_3]$ is a quasi-linear triangular set,
%one can transfer
%the semi-algebraic system
%$$\{T_1=0,T_2=0,T_3=0,1>c_H>c_*>c_L>m>0,d>0\}$$
%to a univariate equivalent.

By the method described in Section \ref{sec:par}, the original
system \eqref{eq:arm-race} can be reduced to a semi-algebraic system
in one variable. The squarefree part of the border polynomial of the
reduced system is
\begin{equation*}
B=dm(d-1)(m+1)(2\,d-m-1)(d-2\, m-1)R_1,
\end{equation*}
where
\begin{equation*}
R_1= 8 \,d^3 m^2  -48\, d^2 m^2+ 96\, d m^2-64 m^2  -71\, d^2 m+104\, d m-32\, m+4\, d-4.
\end{equation*}
Note that $R_1>0$ corresponds to two different regions as shown in
Figure~\ref{fg:r1-region} (with red color). Another polynomial $R_2$
is needed for distinguishing the two regions, where
\begin{equation*}
\begin{split}
R_2=&16\,d^2m^4-64\,dm^4+64\,m^4+32\,d^3m^3-20\,d^2m^3-78\,dm^3+64\,m^3\\
&+16\,d^4m^2-36\,d^3m^2+144\,d^2m^2-240\,dm^2+116\,m^2+3\,d^4m\\
&-100\,d^3m+247\,d^2m-206\,dm+56\,m-8\,d^3+24\,d^2-24\,d+8.
\end{split}
\end{equation*}
It is easy to see from Figure \ref{fg:r1-region} that in the first
quadrant $R_2<0$ and $R_2>0$ contain the left and right part of
$R_1>0$ respectively.

\begin{figure}[h]\centering
  % Requires \usepackage{graphicx}
  \includegraphics[width=7cm]{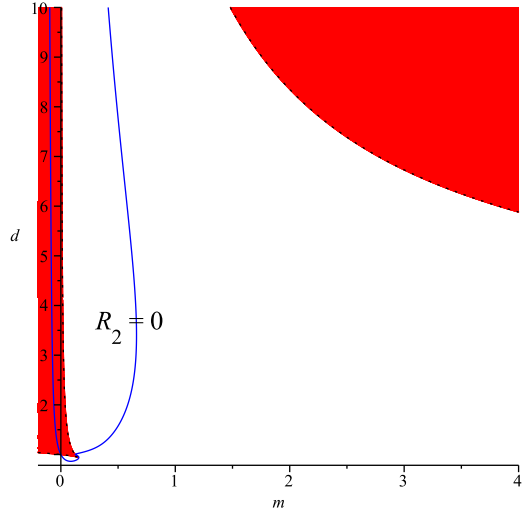}\\
  \caption{Regions of $R_1>0$}\label{fg:r1-region}
\end{figure}

In summary, provided that $B\neq 0$ we can classify the number of
equilibria as follows: under our setting, the arms race game has
\begin{itemize}
  \item    $1$ equilibrium if and only if $d-1<0, 2\,d-m-1>0, R_1<0$;
  \item   $2$ equilibria if and only if $d-1>0, R_1>0, R_2< 0$;
  \item   $3$ equilibria if and only if  $d-1<0, R_1>0$.
\end{itemize}

Assuming that $d\geq 1$ and $m$ is sufficiently small, Baliga and
Sj\"ostr\"om showed the existence of two possible equilibria. The
above result confirms the conclusion of Baliga and Sj\"ostr\"om.
Moreover, we can assert that there is no other equilibrium if $m$ is
small enough and $d>1$ under our setting.

More subtly, we can further compute the region in which there is
only one equilibrium such that $c_H$ is less than a given small
number $a$. We only need to add the inequality $c_H<a$ into system
\eqref{eq:arm-race} and similarly compute the region in which real
solutions exist. Figure \ref{fg:ugoto0} shows the cases in which $a$
equals to $1/10$, $1/20$, and $1/30$. One can see that the
corresponding region shrinks to the $d$-axis as $a\rightarrow 0$.
This confirms the main result of Baliga and Sj\"ostr\"om, i.e.,
Theorem \ref{thm:armrace} in \cite{b04a}. Moreover, the
pre-condition $F(c)\,d\geq c$ ($d\geq 1$) is crucial. For our
setting, if $d\geq 1$ is not satisfied, it can be observed from
Figure~\ref{fg:ugoto0} that no equilibrium with a given small $c_H$
exists even if $m$ is close enough to $0$, which means that the arms
race cannot be avoided with large probability.

\begin{figure}[h]\centering
  % Requires \usepackage{graphicx}
  \includegraphics[width=7cm]{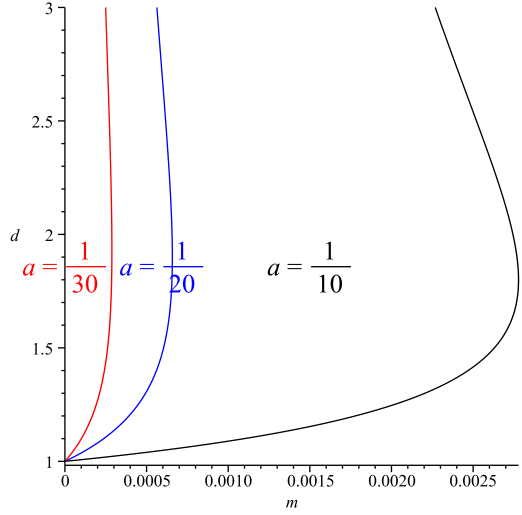}\\
  \caption{Region in which one equilibrium exists with $c_H<a$}\label{fg:ugoto0}
\end{figure}

\subsection{Exchange Economy with Quadratic Utility}

Consider the Arrow--Debreu exchange model with $2$ agents and $2$
commodities, studied first in \cite{k10c}. Let $u_{hl}$ denote the
utility functions for agent $h$ and commodity $l$, where
\begin{equation*}
  \begin{split}
    &u_{11}(c)=9\,c-1/2\,c^2,\quad u_{12}(c)=29/4\,c-7/16\,c^2,\\
    &u_{21}(c)=116\,c-13\,c^2,\quad u_{22}(c)=24\,c-2\,c^2.
  \end{split}
\end{equation*}
Suppose that the endowments of the two agents are $(e_{1},0)$ and
$(0,e_{2})$ respectively, where $e_1,e_2$ are parameters of the
economy. In order to obtain computational results in reasonable
time, we restrict the values of the endowments by $0<e_{h}\leq 10$.

Let $p_1$ and $p_2$ be the prices of the two goods. Moreover,
$(c_{11},c_{12})$ and $(c_{21},c_{22})$ represent the allocations of
the commodities. Then $(p_1,p_2,c_{11}, c_{12},c_{21},c_{22})$ is a
competitive equilibrium of the economy if the allocations maximize
the above utilities under the conditions
\begin{equation*}
  \begin{split}
    &p_1c_{11}+p_2c_{12}\leq p_1e_1,\\
    &p_1c_{21}+p_2c_{22}\leq p_2e_2,\\
    &c_{11}+c_{21}=e_1,\\
    &c_{12}+c_{22}=e_2.
  \end{split}
\end{equation*}

An interior Walrasian equilibrium is a solution $(p_1,p_2,c_{11},
c_{12},c_{21},c_{22},\lambda_1,\lambda_2)$ of the semi-algebraic
system
\begin{equation*}\left\{
  \begin{split}
    &u'_{11}(c_{11})-\lambda_1 p_1=0,\\
    &u'_{12}(c_{12})-\lambda_1 p_2=0,\\
     &u'_{21}(c_{21})-\lambda_2 p_1=0,\\
    &u'_{22}(c_{22})-\lambda_2 p_2=0,\\
    &p_1c_{11}+p_2c_{12}- p_1e_1=0,\\
    &p_1c_{21}+p_2c_{22}-p_2e_2=0,\\
    &c_{11}+c_{21}-e_1=0,\\
    &p_1+p_2-1=0,\\
    &p_1>0,p_2>0,\lambda_1>0,\lambda_2>0,\\
    &c_{hl}>0,10\geq e_{h}>0,
  \end{split}\right.
\end{equation*}
where $u'_{hl}$ (the derivatives of $u_{hl}$) are the marginal
utility functions.

Fix a variable ordering, e.g., $e_1<e_2<p_1<p_2<c_{11}<
c_{12}<c_{21}< c_{22}<\lambda_1<\lambda_2$, and decompose the set of
equation polynomials into regular systems. The result is somewhat
{complicated and} annoying for reading and thus is not
reproduced here. Among the obtained regular systems, the main branch
is quasi-linear and the first polynomial in the regular set is of
degree $4$ in $p_1$. Next compute the equivalent semi-algebraic
system in one variable and construct its border polynomial. The
squarefree part of the border polynomial is of degree $25$ with
$249$ terms.

The final analytical result that can be derived is: multiple
(exactly three) equilibria appear in the trade economy if and only
if $R<0$, where
\begin{equation*}
  \begin{split}
    R=& 14336\,e_2^4-2489600\,e_2^3+3153968\,e_1^2e_2^2-75973600\,e_1e_2^2+603410000\,e_2^2\\
    &-73508800\,e_1^2e_2+1369715000\,e_1e_2-8810812500\,e_2+106496\,e_1^4\\
    &-12416000\,e_1^3+925640000\,e_1^2-13045500000\,e_1+60315234375,\\
  \end{split}
\end{equation*}
provided that the border polynomial is not annihilated.

For any given values $\overline{e}_{1},\overline{e}_{2}\in(0,10]$,
the multiplicity of equilibria can be easily obtained by determining
the sign of $R(\overline{e}_{1},\overline{e}_{2})$. For example,
Kubler and Schmedders \cite{k10c} showed that the economy has $3$
equilibria when $e_1=10$, $e_2=10$. This result can be confirmed by
using our approach since $R(10,10)=-11390625$.

Figure \ref{fg:quadratic} shows the region of the parameter space
described by $R<0$, $0<e_1,e_2\leq 10$. It may be observed that the
possibility for the existence of multiple equilibria is very small.
Furthermore, one can see that multiple equilibria may not appear
when either of the endowment parameters $e_1,e_2$ is sufficiently
small.

\begin{figure}[h]\centering
  % Requires \usepackage{graphicx}
  \includegraphics[width=7cm]{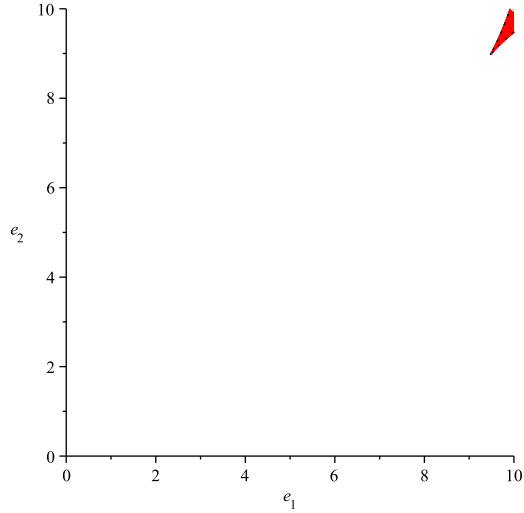}\\
  \caption{Region of $R<0$, $0<e_1,e_2\leq 10$ }\label{fg:quadratic}
\end{figure}

\section{Conclusion}\label{sec:con}

Determining the existence of multiple equilibria is an important
issue in both theoretical and practical studies of economic models.
Equilibria of semi-algebraic economies may be characterized by
semi-algebraic systems. We have proposed an approach that allows
systematic identification of the multiplicities of equilibria of
semi-algebraic economies.

Two problems of identifying multiplicities of equilibria, for
semi-algebraic economies without or with parameters, are addressed.
The basic idea of solving the problems is to first transform the
underlying semi-algebraic systems in several variables into those in
a single variable and then analyze the real solutions of the
resulting systems. The methods we have presented are different from
those based on numerical computation. They can be used to establish
exact and rigorous results and thus are more adequate for the
theoretical study of economic models.

Compared to the method of Kubler and Schmedders \cite{k10c,k10t},
ours can better handle models with inequality constraints, which are
fairly prevalent in practice. Moreover, for parametric economies,
necessary and sufficient conditions for the existence of multiple
equilibria can be automatically generated and explicitly given by
using our methods.

A shortcoming of our methods is their low efficiency for large
problems. We expect to work out specialized and efficient techniques
(e.g., by combining triangular decomposition with PCAD) to improve
the performance of our methods for certain classes of large economic
models. It is hoped that the approach introduced in this paper can
be refined, extended, and further developed to become a potentially
powerful alternative or complement to the widely used numerical
approaches for computational economics.

%\nocite{gianni1996sfa, maza11pgc, wang1998dps}
\bibliographystyle{elsarticle-num}
\bibliography{CompEco}

\end{document}